\def\BibTeX{{\rm B\kern-.05em{\sc i\kern-.025em b}\kern-.08em
    T\kern-.1667em\lower.7ex\hbox{E}\kern-.125emX}}
\newtheorem{theorem}{Theorem}
\newtheorem{lemma}{Lemma}
\newtheorem{example}{Example}
\newtheorem{definition}{Definition}
\newtheorem{remark}{Remark}
\begin{document}

\title{Differentially Private Community Detection\\
  in $h$-uniform Hypergraphs}

\author{Javad~Zahedi~Moghaddam, Aria~Nosratinia
}


\IEEEtitleabstractindextext{
\begin{abstract}
\justifying This paper studies the exact recovery threshold subject to preserving the privacy of connections in $h$-uniform hypergraphs. Privacy is characterized by the $(\epsilon, \delta)$-hyperedge differential privacy (DP), an extension of the notion of $(\epsilon, \delta)$-edge DP in the literature. The hypergraph observations are modeled through a $h$-uniform stochastic block model ($h$-HSBM) in the dense regime. We investigate three differentially private mechanisms: stability-based, sampling-based, and perturbation-based mechanisms. We calculate the exact recovery threshold for each mechanism and study the contraction of the exact recovery region due to the privacy budget, $(\epsilon, \delta)$. Sampling-based mechanisms and randomized response mechanisms guarantee pure $\epsilon$-hyperedge DP where $\delta=0$, while the stability-based mechanisms cannot achieve this level of privacy. The dependence of the limits of the privacy budget on the parameters of the $h$-uniform hypergraph is studied. More precisely, it is proven rigorously that the minimum privacy budget scales logarithmically with the ratio between the density of in-cluster hyperedges and the cross-cluster hyperedges for stability-based and Bayesian sampling-based mechanisms, while this budget depends only on the size of the hypergraph for the randomized response mechanism.

\end{abstract}   

\begin{IEEEkeywords}
Community detection, differential privacy, exact recovery, stochastic block model, hypergraph.
\end{IEEEkeywords}
}

\maketitle

\IEEEdisplaynontitleabstractindextext

\IEEEpeerreviewmaketitle

\IEEEraisesectionheading{\section{Introduction}\label{sec:introduction}}

\IEEEPARstart{P}rivacy-preserving learning and inference on sensitive graph datasets has been widely acknowledged as an important area of inquiry~\cite{8661646,9521831,9669059,9964113,10049709,10232888,10304338,10877783,11078900}.  
 One of the learning tasks is  community detection~\cite{10947335}  which  aims to uncover latent group structures in networks by analyzing the connectivity patterns among nodes.  These connectivity patterns can be represented simply by an edge between two nodes in a graph, as in the case of a friendship network, or by a hyperedge in hypergraphs, as in the case of higher-order relations like transactions involving multiple bank accounts, individuals, and institutions in financial networks. A simple graph and hypergraph are shown in  Fig.~\ref{fig:graph_hypgraph}.  Publishing the result of a community detection algorithm can inadvertently reveal some affiliations or behavioral traits in a network that are deemed private~\cite{9057414,10236973}. Therefore, data curators seek privacy-preserving methods for reliable community recovery. Differential privacy (DP)~\cite{dwork2006differential,dwork2008differential,dwork2014algorithmic} has emerged as a framework that characterizes meaningful privacy guarantees. In the context of edge differential privacy~\cite{mulle2015privacy}, the data curator allows accurate community detection while preserving edge information privacy.

Unsurprisingly, the addition of privacy mechanisms affects the performance of the community detection algorithm. Therefore, it is valuable to understand how differential privacy impacts the information-theoretic limits of community detection. Fundamental limits reveal the conditions under which no algorithm, regardless of computation, can reliably recover the underlying community structures. 

This problem is at the forefront of current research in statistical learning~\cite{mohamed2022differentially,seif2023differentially}, especially for the exact recovery wherein the community detection algorithm is tasked with retrieving the underlying labels of all members with error probability approaching zero in the asymptote of large graph size. In~\cite{mohamed2022differentially,seif2023differentially}, it was rigorously proved that edge-DP restricts the exact recovery fundamental limits compared with non-private cases in graphs. 

This study rigorously investigates the impact of differential privacy on the exact recovery fundamental limits of $h$-uniform hypergraphs. The hypergraph privacy problem is distinct from the graph privacy in ways described in the sequel, and not been studied thus far.

\begin{figure}
\centering
\subfloat[]{\includegraphics[width=0.24\textwidth]{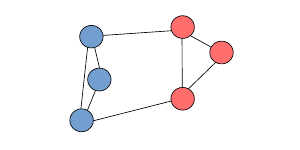}}
\hfil
\subfloat[]{\includegraphics[width=0.24\textwidth]{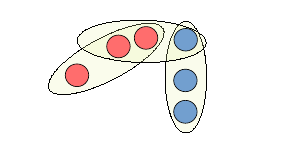}} 
\caption{  Underlying communities in red and blue for (a) common graph, (b) hypergraph with 3-uniform hyperedges }
\label{fig:graph_hypgraph}
\end{figure}

\subsection{Contribution}
\label{subsec:contributions}
Inspired by the edge-DP notion~\cite{mulle2015privacy}, we begin by defining hyperedge differential privacy (hyperedge-DP) for $h$-uniform hypergraphs. Then, we utilize the generative $h$-uniform stochastic block model ($h$-HSBM) in the dense regime to represent hypergraph observations, where formation of hyperedges is governed by a probability distribution that depends on the community memberships of the vertices they connect, such that in-cluster hyperedges occur with higher probability than cross-cluster hyperedges. Within this structure, we then 
 adapt the non-private optimal estimator of a $h$-HSBM into a $(\epsilon, \delta)$-hyperedge DP estimator. We investigate three distinct differentially private algorithms in this context: stability-based mechanisms, sampling-based mechanisms, and the random response mechanisms. We derive the exact recovery bound for each mechanism within our established framework. 

Our analysis of the fundamental limits for optimal estimators reveals that under $(\epsilon, \delta)$-hyperedge DP, the exact recovery regions contract compared with a non-private scenario. In addition, while sampling-based and perturbation-based mechanisms ensure pure $\epsilon$-hyperedge DP with $\delta=0$, stability-based mechanisms cannot provide this level of privacy.  Moreover, the parameters of the $h$-HSBM determine the most private DP scenario, i.e., the smallest values of $(\epsilon,\delta)$ among all mechanisms. More precisely, the smallest $\epsilon$ for stability-based DP and Bayesian sampling DP mechanisms grows logarithmically with the ratio between the densities of in-cluster and cross-cluster hyperedges, $\log(a/b)$, where $a$ and $b$ are model parameters controlling these densities. This means that as $\log(a/b)$  increases, the distinguishability of communities gets better, but leads to a less private DP scenario. For the randomized response mechanism, the minimum privacy budget $\epsilon$  grows logarithmically with the size of the hypergraph, $\log(n)$, and this means community detection under hyperedge DP is less effective for large-scale networks. Furthermore, the exponential sampling-based mechanism is the only one whose privacy budget is independent of hypergraph parameters and can provide differential privacy for any positive $\epsilon$.

The discovered exact recovery thresholds under hyperedge-DP are provided for each mechanism in Table~\ref{tab:contribution} where $h$ is the order of hyperedges, $\lambda = \Omega_n(1)$ and $t = \Omega_n(1)$.

\begin{table*}[ht]

    \centering
    \caption{  Information-theoretic limits of differentially private exact recovery for different mechanisms under hyperedge-level differential privacy }
    \begin{tabular}{c|c|c|c}
       Scenario & \( \delta\) & \( \epsilon \) & Threshold \\ \hline
         Non-private &  0 & 0 &  $(\sqrt{a} - \sqrt{b})^2  > 2^{h-1} $    \\
         &  &  & \\
         RR DP & 0 &\( \Omega_n(\log(n))\) &    $(\sqrt{a+\lambda}-\sqrt{b+\lambda})^2  > 2^{h-1}$ \\
         &  &  & \\
         Stability-based DP & \(n ^ {-t}\) & \( \geq \frac{t+1}{2}  \log(\frac{a}{b}) \) &  \( a+b-\sqrt{ \frac{(t+1)^2}{4 \epsilon^2}  \big(\frac {h}{h-1} \big)^{2h-2}+4ab}  > 2^{h-1} \)  \\
         &  &  &\\
         Exponential Sampling DP& 0& \( \geq \Omega_n(1) \)&    $\epsilon(a-b)  > 2^{h-1} $ \\
         &  &  & \\
         Bayesian Sampling DP & 0& \( \geq \epsilon_0 =  \log( \frac{a}{b})\) &  \(  (1-e^{-\epsilon_0})(a - b)  > 2^{h-1} \) 
    \end{tabular} 
    \label{tab:contribution}
\end{table*}

\subsection{Related Works}
\subsubsection{Exact recovery thresholds in graphs} 
Discovering the information-theoretic limits of exact recovery under various scenarios, and developing efficient algorithms that achieve these fundamental limits are an active area of research in statistical learning and information theory. Identifying these thresholds requires a probabilistic framework that incorporates distinct connectivity probabilities for edges within and between underlying communities. To this end, the stochastic block model (SBM)~\cite{holland1983stochastic} is the most celebrated generative model in the literature. Research on SBMs has demonstrated that exact recovery is feasible only in dense regimes, where the connectivity parameters of communities scale logarithmically with the number of nodes~\cite{abbe2015exact}. Numerous studies have addressed the problem of finding fundamental limits of the exact recovery, and exploited several computationally effective algorithms including spectral methods\cite{yun2014accurate,chen2015spectral,mossel2015consistency,8793181,yun2019optimal,gangrade2019efficient,gaudio2022exact,zhang2023fundamental}, semi-definite programming (SDP)\cite{abbe2015exact,massoulie2014community,hajek2016achieving,hajek2016achieving_extensions,jalali2016exploiting,esm_2021_sdp,yan2021covariate,cbm_javad,javadTnse24}, and approximate belief propagation followed by a majority voting procedure \cite{sandon2017community,Hajek_Wu_Xu_2018,saad_side18,wu2021streaming} to achieve these thresholds in graphs. Comprehensive surveys covering the subject in random graphs can be found in the works~\cite{abbe2017community,ning2023comprehensive}. 
\subsubsection{Exact recovery thresholds in $h$-uniform hypergraphs} In the general hypergraphs, finding the exact recovery thresholds poses many challenges primarily because of the variation in hyperedge orders. However, significant advances have been achieved in addressing specific scenarios like uniform hypergraphs, where all hyperedges have the same order. Inspired by the standard SBM, the $h$-uniform hypergraph SBM ($h$-HSBM) was introduced to analyze uniform hypergraphs\cite{ghoshdastidar2014consistency}. In $h$-HSBM, each hyperedge connects precisely $h$ nodes, and its occurrence probability depends on the communities to which these $h$ nodes belong. Subsequent researchers have further explored the recovery limits of $h$-HSBMs by extending the techniques previously employed in common SBMs analysis \cite{ghoshdastidar2017consistency,kim2017community,lin2017fundamental,kim2018stochastic,chien2018community,ahn2018hypergraph,ahn2019community,chien2019minimax,cole2020exact,lee2020robust,zhang2022exact,gaudio2023community,wang2023projected,deng2023strong}.

\subsubsection{Community detection under differential privacy }

Research studies on private community detection are more focused on proposing edge-DP algorithms. In \cite{mulle2015privacy}, the Privacy Integrated Graph Clustering (PIG) was proposed to ensure edge-DP by perturbing the edges of the original graph. Nguyen \emph{et al.} \cite{nguyen2016detecting} adopted the Louvain algorithm as their backend and introduced LouvainDP for input perturbation along with ModDivisive for algorithm perturbation, which utilizes the exponential sampling mechanism to ensure edge-DP. LDPGen was presented in \cite{qin2017generating}, which is a novel multi-phase technique that incrementally clusters users based on their connections to different partitions of the population to mitigate excessive noise injection in privacy mechanisms while preserving crucial graph properties. Ji \emph{et al.} \cite{8999786} developed Differentially Private Community Detection (DPCD), and they rigorously showed that DPCD guarantees both edge-DP and attribute-DP. In~\cite{hehir2021consistency}, they showcased how employing the edge flipped mechanism to make spectral methods edge-DP impacts the convergence rate of the algorithm in contrast to non-private scenarios. 
Recently, DPRec has been introduced in~\cite{ZHOU2026103621} that integrates privacy preservation and recommendation services together, and it also guarantees a sub-linear convergence rate during the model optimization process in theory, which can also satisfy 
$\epsilon$-differential privacy.
 
 Performance bounds for algorithms providing DP community detection were studied in~\cite{mulle2015privacy,nguyen2016detecting,qin2017generating,imola2021locally,ji2019differentially,hehir2021consistency,ZHOU2026103621}. Performance bounds for {\em optimal} edge-DP community detection, however, have only recently been studied rigorously, and only for common graphs ~\cite{mohamed2022differentially,seif2023differentially}. For common graphs, exploiting stability-based, sampling-based, and perturbation mechanisms, these works showed that differential privacy shrinks the fundamental recovery regions of maximum likelihood compared with non-private cases. They also developed some limits on privacy budgets in different scenarios~\cite{mohamed2022differentially,seif2023differentially}. In this paper, we address the problem of hyperedge privacy in the context of $h$-uniform hypergraphs and study how differential privacy constraints impact the fundamental limits of exact recovery of maximum likelihood in comparison with non-private scenarios under different edge-DP mechanisms.

\section{Problem Settings}
\label{sec:Model and Settings}

\subsection{$h$-HSBM and Exact recovery}

A hypergraph $H$ on the set of $n$ nodes $\mathcal{V} = \{1,\ldots,n\}$ is defined as a collection of hyperedges, each connecting or joining multiple nodes. The order of a hyperedge refers to the number of nodes it connects. A hypergraph is called $h$-uniform if all its hyperedges have the same order $h$. The set of all $h$-subsets of $\mathcal{V}$, namely $\mathcal{W} = {\binom{\mathcal{V}}{h}}$, defines all possible $h$-order hyperedges over the set of nodes $\mathcal V$. A hypergraph $H$ has a collection of hyperedges that is denoted $\mathcal{W}_H$ which is a subset of $\mathcal{W}$.

We now define a generative model of random hypergraphs. Let $0 < q\leq p < 1$, and assume we draw a random $h$-uniform hypergraph $\mathcal{H}$ in the following manner:
\begin{itemize}
\item A vector of binary node labels, also known as communities and denoted $\boldsymbol \sigma \in \{\pm 1\}^n$ is drawn uniformly and independently.
\item Each $w = \{v_1,\cdots,v_h\} \in \mathcal{W}$ is a hyperedge in $\mathcal{W}_\mathcal{H}$ with probability:
\begin{align}
\label{eq:inc_crc_prb}
Pr(w \in \mathcal{W}_\mathcal{H}) = \begin{cases} p &\text{if $\sigma_{v_1}=\sigma_{v_2}=\cdots=\sigma_{v_h}$} \\
q & \text{otherwise}. \end{cases}
\end{align}
\item Any two distinct $w, w'$ occur independently.
\end{itemize}
The random model thus generating $\mathcal H$ is denoted $h$-HSBM$(n,p,q)$, and an outcome of this model is shown with $H$. The hyperedge $w \in \mathcal{W}_\mathcal{H}$ is called \emph{in-cluster} hyperedge with respect to (w.r.t) $\boldsymbol \sigma$ if it occurs with probability $p$. Otherwise, it is a \emph{cross-cluster} hyperedge w.r.t. $\boldsymbol \sigma$. Community detection algorithms aim to find communities to a desirable level of recovery from the observation of an instance (realization) $H$ of the hypergraph. In this paper, we consider exact recovery, which is defined as follows.
\begin{definition}
\label{def:exact_recovery}
For a hypergraph $\mathcal{H}$ generated by ${h}$-$HSBM(n,p,q)$ on the ground truth communities $\boldsymbol\sigma^*$, exact recovery is possible if
\[
\exists \quad \widehat{ \boldsymbol \sigma}(\cdot): \quad Pr(\widehat{\boldsymbol \sigma}({\mathcal H}) \not\in \{\boldsymbol\sigma^\ast,-\boldsymbol\sigma^\ast \}) = o_n(1)
\]
Exact recovery on $\mathcal{H}$ is impossible if:
\[
\forall \quad \widehat{ \boldsymbol \sigma}(\cdot): Pr(\widehat{\boldsymbol \sigma}(\mathcal{H}) \not\in \{\boldsymbol\sigma^\ast,-\boldsymbol\sigma^\ast \}) = 1- o_n(1)
\]
\end{definition}
Exact recovery is feasible only in connected and dense regimes. Therefore, the average degree of nodes in random hypergraph $\mathcal{H}$ must be at least $\frac{c(h-1)\log n}{\binom{n-1}{h-1}}$ for some $c > 1$ \cite{kim2018stochastic} with probability converging to one. Hence, the connectivity parameters of a ${h}$-$HSBM(n,p,q)$ must belong to following regimes:
\[
p = \frac{a\log n}{\binom{n-1}{h-1}} \;, \qquad q = \frac{b \log n}{\binom{n-1}{h-1}}
\]
for some positive constants $a$ and $b$. We focus on assortative random hypergraphs where $a > b$.

\subsection{$(\epsilon, \delta)$-hyperedge differential privacy}
Inspired by neighborings graph and $(\epsilon, \delta)$-edge DP notions~\cite{mulle2015privacy} , we first define neighbor hypergraphs and $(\epsilon, \delta)$-hyperedge DP. 
\begin{definition} [Neighbor Hypergraphs]
\label{def:neighbor_hypergraph_sigma}
Two hypergraphs ${{H}}$ and ${\tilde H}$ are called neighbors if they differ only in one hyperedge, i.e.:
\[
 |\mathcal{W}_H \Delta \mathcal{W}_{{\tilde H}}| = 1
\]
where $\Delta$ is the symmetric set difference operator.
\end{definition}
\begin{definition} [$(\epsilon, \delta)$-hyperedge DP] 
\label{def:hypedgeDP}
For some $\epsilon \in \mathbb{R}^{+}$ and $\delta \in (0, 1]$, an estimator $\hat{\boldsymbol{\sigma}}$ guarantees $(\epsilon, \delta)$-hyperedge DP if for two given neighbor hypergraphs $H$ and ${\tilde H}$:
\begin{align}
\label{eq:hypedgeDP}
Pr(\hat{\boldsymbol{\sigma}}({{H}}) = \boldsymbol{\sigma}) \leq e^{\epsilon}  \; Pr(\hat{\boldsymbol{\sigma}}({\mathcal{H} ={\tilde H}}) = \boldsymbol {\sigma}) + \delta.
\end{align}
Similar to $(\epsilon, \delta)$-edge DP, we call the case of $\delta = 0$ as pure $\epsilon$-hyperedge DP.
\end{definition} 
In the rest of the paper, $\hat{\boldsymbol{\sigma}}(H)$ implicitly shows $H$ is a realization of a random hypergraph $\mathcal{H} $. 

\section{Diferentially Private Community detection on $h$-uniform Hypergraphs}
\label{sec:Main Results}
This section investigates the impact of differential privacy mechanisms on community detection, particularly analyzing the exact recovery thresholds of the maximum likelihood estimator for $h$-uniform hypergraphs in differentially private scenarios. To this end, we study the recovery threshold of community detection subject to three privacy mechanisms discussed in the literature: stability-based mechanisms, randomized response mechanisms, and sampling-based mechanisms, as outlined in the following subsections.

\subsection{Stability-based Mechanism}
A stable estimator satisfies the hyperedge-DP property if an adversary who knows the communities cannot exactly determine whether a specific hyperedge exists in hypergraph $H$. Distance to instability is the core notion in the stability-based mechanism, which we define as follows for a uniform hypergraph with respect to labeling.

\begin{definition}[Distance to instability of $H$ w.r.t. $\hat{\boldsymbol{\sigma}}$]
\label{def:dist2inst}
The distance to instability of $H$ w.r.t. $\hat{\boldsymbol{\sigma}}$ is defined as:
\begin{align}
\label{eq:dist2inst}
d({H};{\hat{\boldsymbol{\sigma}}}) = \{\min_k:  \exists {\tilde H}, |\mathcal{W}_H \Delta \mathcal{W}_{{\tilde H}}| \leq k, \hat{\boldsymbol{\sigma}}(H)\neq \hat{\boldsymbol{\sigma}}({\tilde H})\}. 
\end{align}
\end{definition}

Inspired by Propose-Test-Release~\cite{dwork2014algorithmic}, our stability-based mechanism starts from a {\em non-private} estimator to recover communities from hypergraph $H$, $\hat{\boldsymbol{\sigma}}({H})$. Then, it calculates the distance to the instability of the non-private $\hat{\boldsymbol{\sigma}}({H})$ based on Definition~\ref{def:dist2inst}. The mechanism then compares the stability of the private estimator on the graph $H$ with a privacy threshold, which depends on $(\epsilon, \delta)$ budget. Finally, the mechanism outputs the non-private estimate $\hat{\boldsymbol{\sigma}}(H)$ if stability is larger than the privacy threshold, otherwise it generates and outputs a random community vector, See {\em Mechanism~\ref{mech:dist}} as an instantiation of Propose-Test-Release below. 

\begin{algorithm}[H]
\small{
\caption{:$\mathcal{M}_{dist}(H;{\hat{\boldsymbol{\sigma}}})$}
\label{mech:dist}
\begin{algorithmic}[1]
 \STATE {\bfseries Input:} $H(\mathcal{V}, \mathcal{W}_H) \in \mathcal{H}$
 \STATE {\bfseries Output:} $\mathcal{M}_{dist}(H;{\hat{\boldsymbol{\sigma}}})$ (private community vector)
\IF{$d ({H};{\hat{\boldsymbol{\sigma}}})  + \text{Lap}(1/\epsilon)> \frac{\log{1/\delta}}{\epsilon} $}
\STATE $\mathcal{M}_{dist}(H;{\hat{\boldsymbol{\sigma}}}) \leftarrow \hat{\boldsymbol{\sigma}}$
\ELSE
\STATE $\mathcal{M}_{dist}(H;{\hat{\boldsymbol{\sigma}}}) \leftarrow \perp$ (random community vector) 
\ENDIF
\end{algorithmic}}
\end{algorithm}
\noindent where $\text{Lap}(.)$ is a Laplacian random variable with the parameter shown in the argument.

\begin{lemma}
\label{lem:hypedgedp_dist}
For any community detection algorithm $\hat{\boldsymbol{\sigma}}$, $\mathcal{M}_{dist}(H;\hat{\boldsymbol{\sigma}})$ guarantees $(\epsilon, \delta)$-hyperedge DP. 
\end{lemma}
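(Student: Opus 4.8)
\emph{Proof plan.} Mechanism~\ref{mech:dist} is an instantiation of Propose-Test-Release, so the plan is to follow the classical PTR privacy argument, specialized to the hyperedge-neighbor relation of Definition~\ref{def:neighbor_hypergraph_sigma}. The one structural fact to record first is that the query $H\mapsto d(H;\hat{\boldsymbol\sigma})$ has sensitivity one: if $H$ and $\tilde H$ are neighbors then $|\mathcal{W}_H\Delta\mathcal{W}_{\tilde H}|=1$, so the triangle inequality for the symmetric-difference metric on hyperedge sets gives $|d(H;\hat{\boldsymbol\sigma})-d(\tilde H;\hat{\boldsymbol\sigma})|\le 1$. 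Consequently $d(H;\hat{\boldsymbol\sigma})+\text{Lap}(1/\epsilon)$ is exactly the Laplace mechanism at level $\epsilon$ on a sensitivity-one query, and the Boolean test outcome $b(H):=\mathbf{1}\big[\,d(H;\hat{\boldsymbol\sigma})+\text{Lap}(1/\epsilon)>\log(1/\delta)/\epsilon\,\big]$ is $\epsilon$-differentially private by closure under post-processing; in particular $\Pr(b(H)=1)\le e^{\epsilon}\Pr(b(\tilde H)=1)$ and $\Pr(b(H)=0)\le e^{\epsilon}\Pr(b(\tilde H)=0)$ for every pair of neighbors.

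Next I would fix neighbors $H,\tilde H$ and a target label vector $\boldsymbol\sigma$, and unfold the output law of the mechanism by conditioning on the test outcome $b$. Writing $r$ for the (input-independent) probability that the uniformly random fallback vector equals $\boldsymbol\sigma$, and using that $b$ is independent of that fallback draw,
\[
\Pr\!\big(\mathcal{M}_{dist}(H;\hat{\boldsymbol\sigma})=\boldsymbol\sigma\big)=\mathbf{1}[\hat{\boldsymbol\sigma}(H)=\boldsymbol\sigma]\,\Pr(b(H)=1)+\Pr(b(H)=0)\,r .
\]
The fallback term transfers with a clean multiplicative factor, $\Pr(b(H)=0)\,r\le e^{\epsilon}\Pr(b(\tilde H)=0)\,r$. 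For the release term I would split on whether the single-hyperedge change moves the non-private estimate. If $\hat{\boldsymbol\sigma}(H)=\hat{\boldsymbol\sigma}(\tilde H)$, the release term equals $\mathbf{1}[\hat{\boldsymbol\sigma}(\tilde H)=\boldsymbol\sigma]\,\Pr(b(H)=1)\le e^{\epsilon}\,\mathbf{1}[\hat{\boldsymbol\sigma}(\tilde H)=\boldsymbol\sigma]\,\Pr(b(\tilde H)=1)$, with no additive loss. If instead $\hat{\boldsymbol\sigma}(H)\ne\hat{\boldsymbol\sigma}(\tilde H)$, then $\tilde H$ is itself a neighbor of $H$ witnessing a label change, so Definition~\ref{def:dist2inst} forces $d(H;\hat{\boldsymbol\sigma})=1$; the test then passes only when the Laplace draw exceeds $\log(1/\delta)/\epsilon-1$, an event whose probability is at most $\delta$ by the calibration of the threshold, so the whole release term for $H$ is bounded by $\delta$. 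Adding the two branches, the multiplicative parts combine to at most $e^{\epsilon}\Pr\big(\mathcal{M}_{dist}(\tilde H;\hat{\boldsymbol\sigma})=\boldsymbol\sigma\big)$ and a single additive $\delta$ remains, which is \eqref{eq:hypedgeDP}; since $\boldsymbol\sigma$ was arbitrary, $\mathcal{M}_{dist}$ satisfies $(\epsilon,\delta)$-hyperedge DP.

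The routine ingredients are the sensitivity bound, the Laplace-mechanism guarantee, and the two-branch bookkeeping. The delicate step --- and the one I would be most careful with --- is the unstable sub-case: verifying that a disagreement of the non-private estimator on two neighboring hypergraphs genuinely pins the distance to instability at its minimum value, and then checking the Laplace tail against the threshold $\log(1/\delta)/\epsilon$ so that the probability of (erroneously) releasing the non-private estimate is at most $\delta$. This is exactly the computation that dictates the form of the threshold in Mechanism~\ref{mech:dist}, and it is the crux of the lemma.
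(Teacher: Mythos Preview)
Your proposal is correct and follows essentially the same approach as the paper: establish that $H\mapsto d(H;\hat{\boldsymbol\sigma})$ has sensitivity one under the hyperedge-neighbor relation via the triangle inequality, and then invoke the standard Propose--Test--Release privacy argument. The paper's own proof records exactly this sensitivity bound and then simply defers the remainder to~\cite{dwork2014algorithmic}, whereas you spell out the PTR case split (stable vs.\ unstable branch, Laplace tail versus the threshold) explicitly.
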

\begin{proof}
\label{proof:general_stability}
To guarantee the hyperedge-DP, we first need to calculate the sensitivity of $d({H};{\hat{\boldsymbol{\sigma}}})$. Assume a pair of neighbor hypergraphs $H$ and ${\tilde H}$. By triangle inequality, $|d({H};{\hat{\boldsymbol{\sigma}}}) - d({{\tilde H}};{\hat{\boldsymbol{\sigma}}})| \leq 1$, therefore sensitivity is bounded by one and the rest of proof follows from~\cite{dwork2014algorithmic}.
\end{proof}

\begin{theorem}
\label{thm:dist_mle_2} 
For a given $h$-uniform random binary hypergraph $H$ sampled from ${h}$-$HSBM(n,p,q)$ with same-size communities, $\mathcal{M}_{dist}(H;\hat{\boldsymbol{\sigma}}_{\text{ML}})$ guarantees $(\epsilon, \delta)$-hyperedge DP community detection when:
\begin{subequations}
    \label{eq:mle-2comm}
    \begin{align}
    & \frac{2 \epsilon}{t+1}  \geq \log(\frac{a}{b}) \quad \text{and}\label{eq:mle-2comm_a}\\
    & a+b- 2\sqrt{ \frac{(t+1)^2}{16 \epsilon^2}  \big(\frac {h}{h-1} \big)^{2h-2}+ab} \geq 2^{h-1} \label{eq:mle-2comm_b}
    \end{align}
\end{subequations}
where $\delta = n^{-t}$, $p = \frac{a\log n}{\binom{n-1}{h-1}}$ , $ q = \frac{b \log n}{\binom{n-1}{h-1}}$ , and $a \geq b > 0$.
\end{theorem}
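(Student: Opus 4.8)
The plan is to reduce the claim to three facts: \emph{(i)} the non-private estimator $\hat{\boldsymbol{\sigma}}_{\text{ML}}$ has a simple combinatorial form; \emph{(ii)} on a high-probability event the distance to instability $d(H;\hat{\boldsymbol{\sigma}}_{\text{ML}})$ is so large that the test in \emph{Mechanism~\ref{mech:dist}} passes, so the mechanism returns $\hat{\boldsymbol{\sigma}}_{\text{ML}}(H)$; and \emph{(iii)} on that same event $\hat{\boldsymbol{\sigma}}_{\text{ML}}(H)\in\{\boldsymbol{\sigma}^\ast,-\boldsymbol{\sigma}^\ast\}$. Lemma~\ref{lem:hypedgedp_dist} already supplies the $(\epsilon,\delta)$-hyperedge DP guarantee for \emph{any} backend, so \emph{(i)}--\emph{(iii)} together give the theorem.

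First I would write the log-likelihood of an equipartition $\boldsymbol{\sigma}$ given $H$. Since the number of potential monochromatic $h$-sets and the number of observed hyperedges are the same for every equipartition, the log-likelihood is, up to an additive constant, a strictly increasing affine function of $m_{\mathrm{in}}(\boldsymbol{\sigma};H)$, the number of observed hyperedges that are monochromatic under $\boldsymbol{\sigma}$ (here $p>q$ is used). Hence $\hat{\boldsymbol{\sigma}}_{\text{ML}}(H)=\arg\max_{\boldsymbol{\sigma}}m_{\mathrm{in}}(\boldsymbol{\sigma};H)$ over equipartitions. For the instability distance, the key remark is that adding or deleting one hyperedge changes the \emph{likelihood gap} $g_{\boldsymbol{\sigma}}(H):=m_{\mathrm{in}}(\boldsymbol{\sigma}^\ast;H)-m_{\mathrm{in}}(\boldsymbol{\sigma};H)$ between $\boldsymbol{\sigma}^\ast$ and any fixed $\boldsymbol{\sigma}$ by at most one; so, whenever $\hat{\boldsymbol{\sigma}}_{\text{ML}}(H)=\boldsymbol{\sigma}^\ast$, turning some $\boldsymbol{\sigma}\notin\{\boldsymbol{\sigma}^\ast,-\boldsymbol{\sigma}^\ast\}$ into the maximizer requires at least $g_{\boldsymbol{\sigma}}(H)$ hyperedge edits, i.e.
\[
d(H;\hat{\boldsymbol{\sigma}}_{\text{ML}})\ \ge\ \min_{\boldsymbol{\sigma}\notin\{\boldsymbol{\sigma}^\ast,-\boldsymbol{\sigma}^\ast\}} g_{\boldsymbol{\sigma}}(H).
\]

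The technical heart is a large-deviation estimate of this minimum carrying a slack that beats the test threshold. Since $\mathrm{Lap}(1/\epsilon)>-\tfrac{\log n}{\epsilon}$ with probability $1-o_n(1)$ and the threshold is $\tfrac{\log(1/\delta)}{\epsilon}=\tfrac{t\log n}{\epsilon}$, it suffices to show that with probability $1-o_n(1)$, simultaneously over all competitors, $g_{\boldsymbol{\sigma}}(H)\ge\gamma:=\tfrac{(t+1)\log n}{\epsilon}$. I would reduce this minimum to the binding case of competitors obtained by exchanging one node of each community; for such $\boldsymbol{\sigma}$, $g_{\boldsymbol{\sigma}}(H)$ is a difference of two independent sums of Bernoulli's whose means are of order $\tfrac{a\log n}{2^{h-1}}$ and $\tfrac{b\log n}{2^{h-1}}$ (coming from $\binom{n/2-1}{h-1}p$ and $\binom{n/2-1}{h-1}q$ up to $1+o_n(1)$ factors). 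A Chernoff bound on $\Pr[g_{\boldsymbol{\sigma}}(H)\le\gamma]$, a union bound over the $\Theta(n^2)$ such competitors, and the routine fact that competitors moving $s\ge 2$ nodes contribute a vanishing sum (their gaps grow like $s$ while their number is $n^{O(s)}$) force the Chernoff exponent to exceed $2\log n$. Optimizing the exponential tilt, tracking the $2^{-(h-1)}$ scaling of the means and the margin $\gamma$, and using \eqref{eq:mle-2comm_a} to keep the tilt admissible -- equivalently, to keep $\gamma$ below the mean gap so the lower-tail bound is non-vacuous -- yields exactly \eqref{eq:mle-2comm_b}, the $\bigl(\tfrac{h}{h-1}\bigr)^{2h-2}$ factor emerging from this bookkeeping. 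I expect this step -- nailing the constants in the exponent while simultaneously balancing $h$, $t$, $\epsilon$, the Laplace fluctuation, and the union bound -- to be the main obstacle; the rest is standard.

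To finish, on the event $\{\min_{\boldsymbol{\sigma}}g_{\boldsymbol{\sigma}}(H)\ge\gamma\}$ the gaps are strictly positive, so $\hat{\boldsymbol{\sigma}}_{\text{ML}}(H)\in\{\boldsymbol{\sigma}^\ast,-\boldsymbol{\sigma}^\ast\}$, and $d(H;\hat{\boldsymbol{\sigma}}_{\text{ML}})\ge\gamma$, whence $d(H;\hat{\boldsymbol{\sigma}}_{\text{ML}})+\mathrm{Lap}(1/\epsilon)>\tfrac{\log(1/\delta)}{\epsilon}$ with probability $1-o_n(1)$; thus $\mathcal{M}_{dist}(H;\hat{\boldsymbol{\sigma}}_{\text{ML}})=\hat{\boldsymbol{\sigma}}_{\text{ML}}(H)\in\{\boldsymbol{\sigma}^\ast,-\boldsymbol{\sigma}^\ast\}$ with probability $1-o_n(1)$. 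This is exact recovery in the sense of Definition~\ref{def:exact_recovery}, and Lemma~\ref{lem:hypedgedp_dist} provides the privacy, completing the argument.
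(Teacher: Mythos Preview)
Your proposal is correct and follows essentially the same route as the paper's proof: the paper also lower-bounds $d(H;\hat{\boldsymbol{\sigma}}_{\text{ML}})$ by the minimum likelihood gap (stated there as an auxiliary lemma in terms of cross-cluster counts $|\Psi(H;\boldsymbol{\sigma})|$ rather than your $m_{\mathrm{in}}$), then applies a Chernoff bound to each competitor followed by a union bound over all swap sizes $s\in[1,n/2]$, with the $(h/(h-1))^{2h-2}$ factor arising from optimizing the $s$-dependence of the relevant binomial parameter $m$. The only cosmetic difference is that the paper treats all $s$ uniformly in one sum rather than isolating $s=1$ as the binding case.
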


\begin{proof}
See Appendix~\ref{prf:thm:dist_mle_2}
\end{proof}

\begin{remark}
\label{rmk:stb1}
The exact recovery region of $\mathcal{M}_{dist}(H;\hat{\boldsymbol{\sigma}}_{\text{ML}})$ for fixed $(\epsilon,\delta)$ is larger in $h$-HSBMs with lower hyperedge order. In the same manner, $h$-HSBM parameters and the exact recovery requirement constrain the available privacy budgets $(\epsilon, \delta)$.
\end{remark}

To elaborate more on Remark~\ref{rmk:stb1}, consider the following Example.

\begin{example}
\label{xmp:stab}
 Let $\delta = n^{-t}$, where $t$ is a positive scalar and:
\[
\mu (a,\epsilon) \triangleq a+1- \sqrt{\frac{(t+1)^2}{4 \epsilon^2}   \big(\frac {h}{h-1} \big)^{2h-2}+4a}
\]
If $b=1$, then $\alpha = a$ and $\mu (a,\epsilon) \geq 2^{h-1}$ determines the exact recovery threshold. Under this setting, green areas in Fig.~\ref{fig:exact_recovery_stability} indicate regions where exact recovery is possible. Fig.~\ref{fig:exact_recovery_stability_sub1} and Fig.~\ref{fig:exact_recovery_stability_sub2} demonstrate the exact recovery region for uniform random hypergraph with hyperedges of order $h=3$ is larger than the uniform random hypergraph with $h=4$ for fixed $\epsilon$ and $\delta$. Additionally,  white  regions in all sub-figures of Fig.~\ref{fig:exact_recovery_stability} illustrate where assortative mixing exceeds the limit for a fixed $\epsilon$ and $\delta$ due to~\eqref{eq:mle-2comm_a}, and the algorithm cannot achieve exact recovery under hyperedge-DP.
\end{example}

\begin{figure*}[htbp]
    \centering
    \subfloat[$h=3$, $t=1$, $b=1$]{\includegraphics[width=0.24\textwidth]{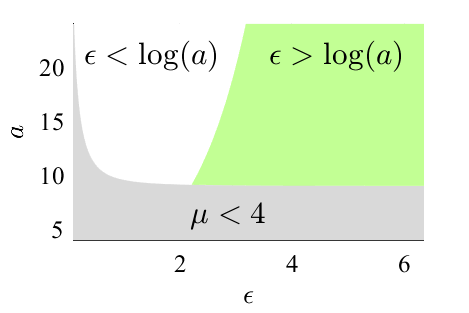}
    \label{fig:exact_recovery_stability_sub1}}
    \hfill
    \subfloat[$h=4$, $t=1$, $b=1$]{\includegraphics[width=0.24\textwidth]{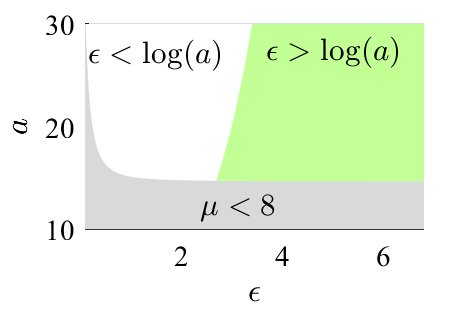}
    \label{fig:exact_recovery_stability_sub2}}
    \hfill
    \subfloat[$h=3$, $t=3$, $b=1$]{\includegraphics[width=0.24\textwidth]{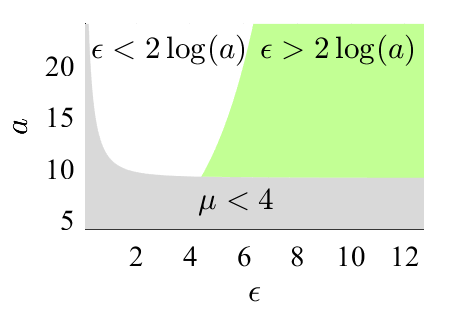}
    \vspace{-10pt}
    \label{fig:exact_recovery_stability_sub3}}
    \hfill
    \subfloat[$h=4$, $t=3$, $b=1$]{\includegraphics[width=0.24\textwidth]{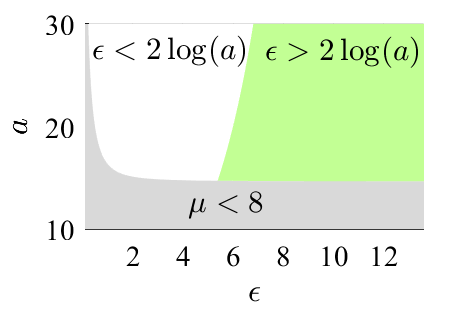}
    \label{fig:exact_recovery_stability_sub4}}
    
    \caption{ Exact recovery regions for Example~\ref{xmp:stab}. Gray: irrecoverable ($\mu < 2^{h-1}$); White: only non-private recovery  ($\mu > 2^{h-1}$, $\epsilon < \frac{t+1}{2} \log{\frac{a}{b}}$). Green: both non-private recovery and private recovery under hyperedge DP constraint ($\mu > 2^{h-1}$, $\epsilon \geq \frac{t+1}{2} \log{\frac{a}{b}}$).  }
    \label{fig:exact_recovery_stability}
\end{figure*}

\begin{remark}
For $\delta=n^{-t}, t>0$, the smallest $\epsilon$ under which $\mathcal{M}_{dist}(H;\hat{\boldsymbol{\sigma}}_{\text{ML}})$ retains asymptotic exact recovery is:
\begin{align*}
   \epsilon = \frac{t+1}{2}  \log(\frac{a}{b})
\end{align*}
In this case, the achievability bound is:
\begin{align*}
       a+b- \frac{1}{\log(\frac{a}{b})} \sqrt{ \big(\frac {h}{h-1} \big)^{2h-2}+ 4ab \log^2(\frac{a}{b})} \geq 2^{h-1}
\end{align*}
This shows that the larger the privacy budget, the more limited the achievability bound becomes. Moreover, a larger exponent $t$ leads to a bigger $\epsilon$ and a reduction in privacy. For Example~\ref{xmp:stab}, this is illustrated in Fig.~\ref{fig:exact_recovery_stability}, where higher $t$ (i.e. a lower $\delta$) results in shrinking exact recovery regions.
\end{remark}

\begin{remark}
For sufficiently large $n$, if $\epsilon \to \infty$, the sufficiency guarantee for exact recovery collapses to 
\begin{align*}
    (\sqrt{a} - \sqrt{b} )^2 \geq 2^{h-1}
\end{align*}
which is consistent with the exact recovery bound in\cite{kim2018stochastic} under no privacy protection.     
\end{remark}

\begin{remark}
Since $\sqrt{x+y} \leq \sqrt{x}+\sqrt{y}$ for positive $x$ and $y$, for every $\epsilon$ and $t$ that satisfy~\eqref{eq:mle-2comm_a}, exact recovery is possible if:
\begin{align}
\label{eq:eq:mle-2comm_2}
    (\sqrt{a} - \sqrt{b})^2 \geq 2^{h-1} \bigg[1 + \frac{t+1}{2 \epsilon} \times \big(\frac{h}{2h-2}\big)^{h-1}\bigg]
\end{align}
When $h=2$, this bound coincides with the exact recovery bound for stability mechanisms for common graphs in~\cite{mohamed2022differentially}.
\end{remark}


\subsection{Randomized Response Mechanism}

Randomized response (RR) is a general technique for differential privacy, which is adopted in this section for $(\epsilon, \delta)$-hyperedge DP community detection in $h$-uniform hypergraphs. In our setting, a randomized response mechanism perturbs ${H}$, resulting in $\tilde{H}$ so that $H$ and $\tilde H$ have the same vertices $\mathcal{V}$, and each hyperedge $w \in \mathcal{W}$ is flipped with probability $\mu$, i.e.:
\begin{align}
\label{mu_rr}
Pr( w \in \mathcal{W}_{{\tilde H}} | w \in \mathcal{W}_{H} ) = 1-\nu
\end{align}  
for some fixed $0<\nu<1$. Now, we can prove if $\nu = \frac{1}{e^{\epsilon} + 1}$, then the Randomized Response mechanism, $\mathcal{M}_{\text{RR}}(H)$, guarantees $\epsilon$-hyperedge DP.
\begin{algorithm}
  \caption{: $\mathcal{M}_{\text{RR}}(H)$}
  \label{mech:RR}
 \small{
 \begin{algorithmic}[1]
         \STATE {\bfseries Input:} $H(\mathcal{V}, \mathcal{W}_H) \in \mathcal{H}$
         \STATE {\bfseries Output:} A Labeling vector $\hat{\boldsymbol{\sigma}} \in \Sigma$
         \STATE Perturb $H \rightarrow \tilde{H} $ with probability $\nu = \frac{1}{e^{\epsilon} + 1}$
         \STATE Apply community detection algorithm on $\tilde{H}$ 
         \STATE Output $\hat{\boldsymbol{\sigma}}(\tilde{H})$ 
  \end{algorithmic}}
\end{algorithm}

\begin{theorem} 
\label{thm:rr_mechanism} 
If $\epsilon = \Omega_n \big( \log(n) \big) $ such that $e^{-\epsilon} = \frac{\lambda \log(n)}{\binom{n-1}{h-1}}$, the mechanism $\mathcal{M}_{\operatorname{RR}}(H)$ asymptotically guarantees $\epsilon$-edge DP,  and satisfies exact recovery for binary balanced communities if 
\begin{align}
\label{eq: rr_mechanism}
    (\sqrt{a+\lambda}-\sqrt{b+\lambda})^2 > 2^{h-1}
\end{align}
where $\lambda = \Omega_n(1)$.
\end{theorem}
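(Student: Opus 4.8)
The statement has two ingredients — an $\epsilon$-hyperedge DP guarantee and an exact-recovery threshold — so I would dispatch the privacy part first and then reduce everything else to a known non-private result. For privacy, recall the randomized-response computation already used for Mechanism~\ref{mech:RR}: if neighbor hypergraphs $H$ and $\tilde H$ disagree only on the hyperedge $w_0$, then for any fixed output $\hat H$ the probability $Pr(\text{perturb}(H)=\hat H)$ factorizes over hyperedges, every factor cancels against the corresponding factor for $\tilde H$ except the one at $w_0$, and that surviving ratio equals $\tfrac{1-\nu}{\nu}$ or $\tfrac{\nu}{1-\nu}$. Taking $\nu=\tfrac{1}{e^{\epsilon}+1}$ gives $\tfrac{1-\nu}{\nu}=e^{\epsilon}$, so the perturbation step is $\epsilon$-hyperedge DP; since step~4 of Mechanism~\ref{mech:RR} only post-processes $\tilde H$, the whole mechanism inherits $\epsilon$-hyperedge DP with $\delta=0$. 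The ``asymptotic'' qualifier is just bookkeeping about the regime: with $e^{-\epsilon}=\tfrac{\lambda\log n}{\binom{n-1}{h-1}}$ one has $\nu=\tfrac{\lambda\log n}{\binom{n-1}{h-1}+\lambda\log n}=\tfrac{\lambda\log n}{\binom{n-1}{h-1}}\bigl(1+o_n(1)\bigr)$, and $\epsilon=\Omega_n(\log n)$ forces $\nu\to 0$.

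For exact recovery, the central point is that independent hyperedge flipping keeps the observation inside the HSBM family. Conditioned on the ground truth $\boldsymbol\sigma^{*}$, each $w\in\mathcal W$ lies in $\mathcal W_{\tilde H}$ independently, with probability $p(1-\nu)+(1-p)\nu=\tilde p$ if $w$ is in-cluster and $q(1-\nu)+(1-q)\nu=\tilde q$ otherwise; hence $\tilde H\sim h\text{-}\mathrm{HSBM}(n,\tilde p,\tilde q)$ exactly. Plugging in $p=\tfrac{a\log n}{\binom{n-1}{h-1}}$, $q=\tfrac{b\log n}{\binom{n-1}{h-1}}$ and $\nu=\tfrac{\lambda\log n}{\binom{n-1}{h-1}}(1+o_n(1))$, and using $p,q,\nu=o_n(1)$ to drop the product terms, yields $\tilde p=\tfrac{(a+\lambda)\log n}{\binom{n-1}{h-1}}(1+o_n(1))$ and $\tilde q=\tfrac{(b+\lambda)\log n}{\binom{n-1}{h-1}}(1+o_n(1))$. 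In particular $\tilde p>\tilde q$ (since $\nu<\tfrac12$), so the perturbed model is still assortative and the direction of the ML estimator is unchanged; moreover the curator knows $p,q,\nu$, hence $\tilde p,\tilde q$, so running $\hat{\boldsymbol\sigma}_{\mathrm{ML}}$ tuned to $h\text{-}\mathrm{HSBM}(n,\tilde p,\tilde q)$ on $\tilde H$ is implementable.

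It then suffices to invoke the non-private exact-recovery threshold for $h$-HSBM (the first row of Table~\ref{tab:contribution}, i.e.\ $(\sqrt{a}-\sqrt{b})^2>2^{h-1}$ from~\cite{kim2018stochastic}) with effective parameters $\tilde a=a+\lambda$, $\tilde b=b+\lambda$: exact recovery is achieved whenever $(\sqrt{a+\lambda}-\sqrt{b+\lambda})^2>2^{h-1}$, which is precisely the claimed condition. The one step I expect to require genuine care is the $(1+o_n(1))$ factors multiplying $\tilde a,\tilde b$: because the achievability condition is an open (strict) inequality and $x\mapsto(\sqrt{x}-\sqrt{y})^2$ is continuous, the perturbed parameters continue to satisfy the threshold for all sufficiently large $n$; and if one wants $\lambda=\lambda_n$ to grow, note that $(\sqrt{a+\lambda}-\sqrt{b+\lambda})^2=\tfrac{(a-b)^2}{(\sqrt{a+\lambda}+\sqrt{b+\lambda})^2}\to 0$ as $\lambda\to\infty$, so the hypothesis itself forces $\lambda=\Theta(1)$ and the constant-parameter result applies (along subsequences if necessary). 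Everything else — the privacy bound and the HSBM reduction — is routine; this robustness/continuity observation is the only delicate part.
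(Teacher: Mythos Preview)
Your proposal is correct and follows essentially the same route as the paper: compute the perturbed in-cluster and cross-cluster probabilities $\tilde p,\tilde q$, observe that $\tilde H$ is again an $h$-HSBM in the logarithmic regime with effective constants $a+\lambda$ and $b+\lambda$, and invoke the non-private threshold from~\cite{kim2018stochastic}. You are in fact more careful than the paper in two places---you spell out the randomized-response DP calculation (which the paper only asserts) and you handle the $(1+o_n(1))$ slack on $\tilde a,\tilde b$ via continuity of the strict threshold---whereas the paper simply drops the lower-order terms without comment.
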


\begin{proof}
 See~Appendix\ref{prf:them:rr_mechanism}.
\end{proof}

\begin{remark}
    Using $\mathcal{M}_{\operatorname{RR}}(H)$  shrinks the exact recovery regions compared with the non-private scenarios. This happens since, for $a \geq b \geq 0$,
    \[
    \sqrt{a} - \sqrt{b} \geq \sqrt{a+\lambda}-\sqrt{b+\lambda}.
    \]
\end{remark}


\subsection{Sampling-based Mechanisms}
Sampling methods, where a central collector randomly discards collected responses, have been widely employed in the literature to enhance privacy. Within the community detection and differential privacy framework, we examine the influence of two commonly used mechanisms, the Bayesian mechanism and the Exponential mechanism, on the exact recovery bound of $h$-uniform hypergraphs.

\subsubsection{Bayesian Sampling Mechanism}
The posterior probability for the labels is given by:
\begin{align}
\label{eq:posterior_b}
    Pr (\boldsymbol{\sigma} | {H}) = \frac{Pr({H} | \boldsymbol{\sigma}) Pr(\boldsymbol{\sigma})}{Pr({H} )}
\end{align}
The Bayesian sampling mechanism $\mathcal{M}_{Bayes} ({H})$ selects a $\hat{\boldsymbol{\sigma}} (H)$ according to the posterior probability above. 

\begin{algorithm}[H]
\caption{: $\mathcal{M}_{Bayes} ({H})$}
\label{mech:bayes_samp}
\small{
\begin{algorithmic}[1]
 \STATE {\bfseries Input:} $H(\mathcal{V}, \mathcal{W}_H) \in \mathcal{H}$
 \STATE {\bfseries Output:} A community vector $\hat{\boldsymbol{\sigma}} (H) \in \Sigma$
\STATE Sample $\hat{\boldsymbol{\sigma}}$ according to $Pr (\hat{\boldsymbol{\sigma}} | {H})$ in \eqref{eq:posterior_b}
\end{algorithmic}}
\end{algorithm}

\begin{theorem} 
  \label{thm:bayes_samp}
  The Mechanism $\mathcal{M}_{\text{Bayes}} (H)$ guarantees $\epsilon$-hyperedge DP,  $\forall \epsilon \geq \epsilon_0 = \log \big(\frac{a}{b}\big)$, and exact recovery is possible for same-size binary communities if 
\begin{align}
   (1-e^{-\epsilon_0})(a - b) > 2^{h-1}
\end{align} 

 \end{theorem}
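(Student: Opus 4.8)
\emph{Plan.} I would prove the two assertions separately: that $\mathcal{M}_{\text{Bayes}}$ is $\epsilon$-hyperedge DP for every $\epsilon\ge\epsilon_0=\log(a/b)$, and that the displayed inequality suffices for exact recovery of balanced communities. For the \emph{privacy} part, note that $\mathcal{M}_{\text{Bayes}}$ outputs $\boldsymbol{\sigma}$ with probability $Pr(\boldsymbol{\sigma}\mid H)$, so for neighbor hypergraphs $H,\tilde H$ (say $\mathcal{W}_H=\mathcal{W}_{\tilde H}\cup\{w\}$) Bayes' rule gives $\frac{Pr(\boldsymbol{\sigma}\mid H)}{Pr(\boldsymbol{\sigma}\mid\tilde H)}=\frac{Pr(H\mid\boldsymbol{\sigma})}{Pr(\tilde H\mid\boldsymbol{\sigma})}\cdot\frac{Pr(\tilde H)}{Pr(H)}$, and it suffices to bound both factors. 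Since the $h$-HSBM likelihood factors over hyperedges and occurrences are independent, for \emph{every} labeling $\boldsymbol{\sigma}'$ the ratio $Pr(H\mid\boldsymbol{\sigma}')/Pr(\tilde H\mid\boldsymbol{\sigma}')$ equals $\tfrac{p}{1-p}$ or $\tfrac{q}{1-q}$ according to whether $w$ is in- or cross-cluster w.r.t.\ $\boldsymbol{\sigma}'$, hence always lies in $[\tfrac{q}{1-q},\tfrac{p}{1-p}]$; thus the first factor is at most $\tfrac{p}{1-p}$, and writing $Pr(H)=\sum_{\boldsymbol{\sigma}'}Pr(H\mid\boldsymbol{\sigma}')Pr(\boldsymbol{\sigma}')\ge\tfrac{q}{1-q}\sum_{\boldsymbol{\sigma}'}Pr(\tilde H\mid\boldsymbol{\sigma}')Pr(\boldsymbol{\sigma}')=\tfrac{q}{1-q}Pr(\tilde H)$ bounds the second by $\tfrac{1-q}{q}$ (the reversed case $\mathcal{W}_{\tilde H}=\mathcal{W}_H\cup\{w\}$ is symmetric). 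Multiplying, $\frac{Pr(\boldsymbol{\sigma}\mid H)}{Pr(\boldsymbol{\sigma}\mid\tilde H)}\le\frac{p(1-q)}{q(1-p)}=\frac{a}{b}\cdot\frac{1-q}{1-p}\to\frac{a}{b}$ as $n\to\infty$ since $p,q\to0$, so $\mathcal{M}_{\text{Bayes}}$ is (asymptotically) $\epsilon_0$-hyperedge DP with $\delta=0$ and, a fortiori, $\epsilon$-hyperedge DP for all $\epsilon\ge\epsilon_0$.

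\emph{Utility.} I would write the misrecovery probability as $\mathbb{E}_{\mathcal{H}\mid\boldsymbol{\sigma}^{\ast}}\!\big[\sum_{\boldsymbol{\sigma}\notin\{\pm\boldsymbol{\sigma}^{\ast}\}}Pr(\boldsymbol{\sigma}\mid\mathcal{H})\big]$ and split the sum by Hamming distance to $\pm\boldsymbol{\sigma}^{\ast}$. The governing term is that of single-vertex flips $\boldsymbol{\sigma}^{(v)}$, for which $Pr(\boldsymbol{\sigma}^{(v)}\mid\mathcal{H})\le\frac{Pr(\mathcal{H}\mid\boldsymbol{\sigma}^{(v)})}{Pr(\mathcal{H}\mid\boldsymbol{\sigma}^{\ast})+Pr(\mathcal{H}\mid\boldsymbol{\sigma}^{(v)})}$; the log-likelihood ratio $L_v=\log\frac{Pr(\mathcal{H}\mid\boldsymbol{\sigma}^{(v)})}{Pr(\mathcal{H}\mid\boldsymbol{\sigma}^{\ast})}$ decomposes into independent bounded increments, one per hyperedge through $v$ whose in-/cross-cluster status the flip toggles — for balanced communities, $\binom{n/2-1}{h-1}$ hyperedges that were in-cluster (each present with probability $p$) and $\binom{n/2}{h-1}$ that were cross-cluster (each present with probability $q$). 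A Chernoff-type estimate of the expected capped likelihood ratio then bounds $\mathbb{E}[Pr(\boldsymbol{\sigma}^{(v)}\mid\mathcal{H})]$ by $n^{-\rho}$ for an exponent $\rho$ in $a,b,h$; the union bound over the $n$ vertices gives $n^{1-\rho}$, which vanishes precisely when $\rho>1$, and rewriting this inequality using $1-e^{-\epsilon_0}=1-b/a$ yields the stated threshold $(1-e^{-\epsilon_0})(a-b)>2^{h-1}$. It then remains to verify that labelings at Hamming distance $k\ge2$ (and symmetrically $\le n-1$) contribute a geometrically smaller amount and are absorbed.

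\emph{Main obstacle.} The hard part is the utility analysis: extracting the correct exponential rate $\rho$ for the expected posterior mass of a single-vertex deviation, which hinges on a careful large-deviation treatment of the sum of i.i.d.\ Bernoulli-type log-factors in $L_v$ under the $\tfrac{\log n}{\binom{n-1}{h-1}}$ scaling, on not degrading the rate when the true normalizer $Pr(\mathcal{H})$ is replaced by the two-term lower bound, and on the combinatorial check that multi-vertex flips do not dominate. By contrast the privacy claim is comparatively routine once one observes that the marginal $Pr(\mathcal{H})$ can be sandwiched by the very same per-hyperedge likelihood ratios that control the conditional law.
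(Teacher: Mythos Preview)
Your privacy argument is essentially the paper's: both factor the posterior ratio via Bayes' rule, bound the likelihood factor by $\tfrac{p}{1-p}$ or $\tfrac{q}{1-q}$, bound the marginal ratio $Pr(\tilde H)/Pr(H)$ by expanding over all $\boldsymbol{\sigma}'$ and using the same per-hyperedge bounds, and conclude with $\log\tfrac{p(1-q)}{q(1-p)}\to\log\tfrac ab=\epsilon_0$.

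The utility argument differs. The paper does not single out one-vertex flips and then absorb higher orders; it bounds $\sum_{\boldsymbol{\sigma}\neq\boldsymbol{\sigma}^\ast}Pr(H\mid\boldsymbol{\sigma})/Pr(H\mid\boldsymbol{\sigma}^\ast)$ all at once, grouping labelings by the size $s\in\{1,\dots,n/2\}$ of a balanced swap with multiplicity $\binom{n}{s}^2$, writing the likelihood ratio in terms of $X_1^{(m)}\sim\mathrm{Binom}(m,p)$ and $X_2^{(m)}\sim\mathrm{Binom}(m,q)$, and then \emph{loosening} $\log(a/b)$ to $\tfrac{a-b}{a}$ (via $\log x\ge 1-1/x$) so that the ratio is bounded by $e^{\frac{a-b}{a}(X_2^{(m)}-X_1^{(m)})}$. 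After (informally) substituting the means of $X_1^{(m)},X_2^{(m)}$ and invoking the combinatorial lower bound $m\ge 2s(1-\tfrac sn)^{h-1}\binom{n-1}{h-1}$, the worst case at $s=n/2$ produces the $2^{h-1}$. The factor $1-e^{-\epsilon_0}=\tfrac{a-b}{a}$ in the stated threshold is precisely the footprint of that slackening step.

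Here is the gap in your plan. If you run the Chernoff bound you describe on the capped likelihood ratio and optimize the Chernoff parameter, the per-vertex rate you obtain is the Bhattacharyya exponent $(\sqrt a-\sqrt b)^2/2^{h-1}$, not $(a-b)^2/(a\,2^{h-1})$; the union bound over $n$ vertices then gives the \emph{non-private} condition $(\sqrt a-\sqrt b)^2>2^{h-1}$. The factor $1-e^{-\epsilon_0}$ will not fall out of an optimized large-deviation computation --- it arises only because the paper deliberately replaces $\log(a/b)$ by $(a-b)/a$. So your closing step, ``rewriting using $1-e^{-\epsilon_0}=1-b/a$ yields the stated threshold,'' does not go through as written: to land on the paper's constant you would have to insert that same inequality (and separately justify replacing the random binomial counts by their means, which the paper does informally), or else accept that your route produces a different sufficient condition from the one in the statement.
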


\begin{proof}
See Appendix~\ref{prf:thm:bayes_samp}
\end{proof}
\begin{remark}
The Bayesian sampling mechanism needs computation of the posterior distribution of a community vector; thus, it needs knowledge of the system's parameters $(a,b)$. In practice, often this information is not readily available. 
\end{remark}

\subsubsection{Exponential Sampling Mechanism}
To circumvent the practical difficulties with Bayesian sampling, the exponential sampling mechanism replaces the (unknown) posterior distribution with an exponential distribution. 
For binary labels, we define $\Psi(H; \hat{\boldsymbol{\sigma}})$ as the set of cross-cluster hyperedges in $H$ w.r.t. labeling $\hat{\boldsymbol{\sigma}}$.
\begin{algorithm}
  \caption{: $\mathcal{M}_{\text{Expo.}}(H)$}
  \label{mech:expo_samp}
  \small{
  \begin{algorithmic}[1]
    \STATE {\bfseries Input:} $H(\mathcal{V}, \mathcal{W}_H) \in \mathcal{H}$
    \STATE {\bfseries Output:} A community vector $\hat{\boldsymbol{\sigma}} \in \Sigma$. 
    \STATE Sample $\hat{\boldsymbol{\sigma}}$ with probability $e^{-\epsilon |\Psi(H; \boldsymbol{\hat \sigma})| }$
  \end{algorithmic}}
\end{algorithm}

\begin{theorem}
\label{thm:exponential_samp}
  The exponential sampling mechanism  $\mathcal{M}_{\text{Expo.}}(H)$ guarantees $\epsilon$-hyperedge DP, and exact recovery is possible for the same-size binary communities if:
  \begin{align}
     \epsilon(a-b) > 2^{h-1}
 \end{align}
\end{theorem}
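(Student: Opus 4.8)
\emph{The statement has two halves --- a pure $\epsilon$-hyperedge DP guarantee and an exact-recovery threshold --- which I would treat separately.} For the privacy half, observe that $\mathcal{M}_{\text{Expo.}}(H)$ is the exponential mechanism with score $u(H,\hat{\boldsymbol{\sigma}}) = -|\Psi(H;\hat{\boldsymbol{\sigma}})|$, and the plan is to run the usual argument while exploiting that this score is \emph{monotone} in the hyperedge set: inserting a single hyperedge $w_0$ increases $|\Psi(H;\hat{\boldsymbol{\sigma}})|$ by $\mathbb{1}[w_0 \text{ cross-cluster w.r.t.\ } \hat{\boldsymbol{\sigma}}]\in\{0,1\}$, simultaneously for every labeling $\hat{\boldsymbol{\sigma}}$. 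Writing $Pr(\hat{\boldsymbol{\sigma}}(H)=\boldsymbol{\sigma}) = e^{-\epsilon|\Psi(H;\boldsymbol{\sigma})|}/Z_H$ with $Z_H = \sum_{\boldsymbol{\sigma}'} e^{-\epsilon|\Psi(H;\boldsymbol{\sigma}')|}$, monotonicity gives, for any pair of neighbours, $e^{-\epsilon}Z_H \le Z_{\tilde H} \le Z_H$ and $1 \le e^{-\epsilon|\Psi(H;\boldsymbol{\sigma})|}/e^{-\epsilon|\Psi(\tilde H;\boldsymbol{\sigma})|} \le e^{\epsilon}$; multiplying the two ratios bounds the change in the output distribution by a factor in $[e^{-\epsilon},e^{\epsilon}]$, i.e.\ $\epsilon$-hyperedge DP with no factor of two (monotonicity of the score is exactly what removes it).

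For the recovery half I would first reduce to a sum over competing labelings. Fix a balanced ground truth $\boldsymbol{\sigma}^*$. Since $|\Psi(H;\boldsymbol{\sigma}^*)| = |\Psi(H;-\boldsymbol{\sigma}^*)|$, the normaliser is at least $2e^{-\epsilon|\Psi(H;\boldsymbol{\sigma}^*)|}$, hence
\[
Pr\big(\hat{\boldsymbol{\sigma}} \notin \{\pm\boldsymbol{\sigma}^*\}\big) \le \tfrac12 \sum_{\boldsymbol{\sigma}\neq\pm\boldsymbol{\sigma}^*} e^{-\epsilon\left(|\Psi(H;\boldsymbol{\sigma})| - |\Psi(H;\boldsymbol{\sigma}^*)|\right)}.
\]
The plan is to show the right-hand side is $o_n(1)$ with probability $1-o_n(1)$ over $H \sim h$-$HSBM$, by a first-moment bound and Markov's inequality. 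For $\boldsymbol{\sigma}$ at Hamming distance $m$ from the closer of $\pm\boldsymbol{\sigma}^*$, the $h$-subsets on which $\boldsymbol{\sigma}$ and $\boldsymbol{\sigma}^*$ disagree split into a family $A$ (in-cluster under $\boldsymbol{\sigma}^*$, cross under $\boldsymbol{\sigma}$; present with probability $p$) and a family $B$ (cross under $\boldsymbol{\sigma}^*$, in-cluster under $\boldsymbol{\sigma}$; present with probability $q$), giving $|\Psi(H;\boldsymbol{\sigma})| - |\Psi(H;\boldsymbol{\sigma}^*)| = X_A - X_B$ with independent $X_A \sim \mathrm{Bin}(|A|,p)$, $X_B \sim \mathrm{Bin}(|B|,q)$. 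A counting step yields $|A|,|B| = (1+o_n(1))\,m\binom{n/2}{h-1}$ for $m = o(n)$, and the asymptotic $\binom{n/2}{h-1}/\binom{n-1}{h-1} \to 2^{-(h-1)}$ turns this into $\mathbb{E}X_A = (1+o_n(1))\tfrac{ma\log n}{2^{h-1}}$, $\mathbb{E}X_B = (1+o_n(1))\tfrac{mb\log n}{2^{h-1}}$, both diverging.

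Since the means diverge, concentration of $X_A$ and $X_B$ gives $X_A - X_B \ge (1-o_n(1))\tfrac{m(a-b)\log n}{2^{h-1}}$ on a high-probability event, so $e^{-\epsilon(X_A-X_B)} \le n^{-(1-o_n(1))\,\epsilon m(a-b)/2^{h-1}}$. As there are at most $2\binom{n}{m} \le 2n^m$ labelings at distance $m$, the error sum is bounded by $\sum_{m\ge1} 2\,n^{\,m\left(1-(1-o_n(1))\epsilon(a-b)/2^{h-1}\right)}$, a geometric series dominated by the single-flip term $m=1$ whose sum is $o_n(1)$ precisely when $\epsilon(a-b) > 2^{h-1}$; Markov's inequality then promotes this to a $1-o_n(1)$ statement over $H$, and combining with the first half finishes the proof.

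\emph{Main obstacle.} The delicate point is controlling $|\Psi(H;\boldsymbol{\sigma})| - |\Psi(H;\boldsymbol{\sigma}^*)|$ \emph{uniformly} over the exponentially many competing labelings: the fluctuations of $X_A - X_B$ are of the same order $\Theta(m\log n)$ as its mean, so the union bound over the $\binom{n}{m}$ labelings at distance $m$ has to be executed at the correct large-deviations rate --- through a Chernoff/Bennett tail for a difference of two binomials, together with the ratio $\binom{n/2}{h-1}/\binom{n-1}{h-1}\to 2^{-(h-1)}$ that supplies the constant $2^{h-1}$ --- rather than via a crude concentration estimate. One must also verify that in the stated regime ($\epsilon = \Omega_n(1)$, $p,q = \Theta(\log n / n^{h-1})$) the relevant exponential moments stay finite and that the contribution of large $m$ is genuinely lower order, so that the binding constraint is the $m=1$ (single misclassified vertex) term.
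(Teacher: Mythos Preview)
Your proposal follows essentially the same route as the paper: bound the ratio $Pr(\hat{\boldsymbol{\sigma}}\neq\boldsymbol{\sigma}^*\mid H)/Pr(\boldsymbol{\sigma}^*\mid H)$ by $\sum_{\boldsymbol{\sigma}\neq\pm\boldsymbol{\sigma}^*} e^{-\epsilon(|\Psi(H;\boldsymbol{\sigma})|-|\Psi(H;\boldsymbol{\sigma}^*)|)}$, write the exponent as a difference $X_A-X_B$ of binomials, group by the number of misclassified nodes, and union-bound. Two minor differences: you supply an explicit monotone-exponential-mechanism argument for the $\epsilon$-DP half (the paper's appendix does not prove this part at all), and where you estimate $|A|,|B|\approx m\binom{n/2}{h-1}$ directly for $m=o(n)$, the paper instead invokes its Lemma~4 to get a lower bound $2s(1-s/n)^{h-1}\binom{n-1}{h-1}$ that holds uniformly in $s$---this is precisely what controls the large-$m$ tail you flag at the end.

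One caution on the step you correctly identify as the main obstacle. Your stated plan of ``first-moment bound plus Markov'', carried out literally via $\mathbb{E}_H\big[e^{-\epsilon(X_A-X_B)}\big]$, yields the exponent $a(1-e^{-\epsilon})-b(e^{\epsilon}-1)$ rather than $\epsilon(a-b)$; since $a(1-e^{-\epsilon})-b(e^{\epsilon}-1)<\epsilon(a-b)$ for every $\epsilon>0$, this gives a strictly worse threshold than the one claimed. The paper does \emph{not} take an expectation: it simply replaces $X_1^{(m)}-X_2^{(m)}$ by its mean $m(p-q)$ inside the exponent and proceeds, i.e., it uses exactly the heuristic concentration substitution you sketch afterwards. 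So your concentration variant matches what the paper does, while your Markov variant does not; be aware that making either one rigorous at the sharp constant $\epsilon(a-b)>2^{h-1}$ is not addressed in the paper either, and is where the real work lies.
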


\begin{proof}
See Appendix~\ref{prf:thm:exponential_samp}
\end{proof}

\begin{remark}
    In this method, the disparity between in-cluster and cross-cluster hyperedges has a direct effect on the privacy budget. The bigger the disparity, the smaller the $\epsilon$ it allows while maintaining exact recovery. Conversely, in higher privacy scenarios where $\epsilon$ is small, the exact recovery regions will contract.
\end{remark}

The core results of this paper are independent of the time complexity of privacy-preserving methods. However, for completeness, Table~\ref{tab:contribution_comp} compares the time complexity of various techniques. Both exponential and Bayesian sampling DP require sampling over all possible community assignments, which is exponential in the number of nodes. Step~3 of the stability-based mechanism involves finding a nearby hypergraph where the labeling changes, with time complexity $O(n^{\log n})$, due to $O(\log n)$ hyperedge modifications. The $n^h$ term in randomized response arises from constructing the perturbed hypergraph in Mechanism~\ref{mech:RR}. Finally, $O(\phi(n))$ denotes the time complexity of non-private recovery (e.g., $\phi(n) = n^2 \log n$ for the spectral method~\cite{gaudio2023community}).

\begin{table}[ht]
    \centering  
    \caption{ The time complexity of a community detection algorithm under non-private and differentially private mechanisms. }
    \begin{tabular}{ll}
        {Scenario} & {Time Complexity} \\[4pt] \hline\\
        Non‑private                                & $O(\phi(n))$ \\[4pt]
        Randomized Response DP                                & $O\bigl(\phi(n) + n^{h}\bigr)$ \\[4pt]
        Stability‑based DP                         & $O\bigl(\phi(n) + n^{\log(n)}\bigr)$ \\[4pt]
        Exponential Sampling DP                       & $O\bigl(\exp(n)\bigr)$ \\[4pt]
        Bayesian Sampling DP                          & $O\bigl(\exp(n)\bigr)$
    \end{tabular}
    \label{tab:contribution_comp}    
\end{table}


\section{ Experimental Results }
\label{sec:expr}

The experimental results in this section highlight the performance of the proposed private community detection algorithms and the tradeoffs between privacy and community recovery. We test the RR-based performance mechanism on generated graphs (SBMs). 
For each parameter setting $(\epsilon, a, b)$, we perform 100 Monte Carlo simulations. In each iteration, we estimate the community labels using tensor trace maximization (TTM) and normalized hypergraph cut (NHCut), as described in~\cite{ghoshdastidar2017uniform}. To evaluate performance, we compute the normalized Hamming distance between the true and estimated labels (misclassification error), which serves as an estimate of the error probability, i.e. $ \Pr(\hat{\boldsymbol{\sigma}} \neq \boldsymbol{\sigma}^\ast)$.

The first experiment investigates how assortativity, which is the ratio of the in-cluster to cross-cluster hyperedge probabilities, i.e. $\frac{a}{b}$, affects the estimated error probability.  In this experiment, we analyze the misclassification error under both the non-private and RR-based hyperedge DP scenarios. In the non-private scenario, TTM and NHCut algorithms recover labels from the original graph, whereas in the RR-based hyperedge DP scenario, they recover labels from the perturbed graph by Mechanism~\ref {mech:RR}.  

To study the effect of assortativity, we consider a balanced binary 3-uniform hypergraph with 100 nodes, and we set $b =1$ and vary $a$. In addition, we fix the privacy budget $\epsilon = 7$ for the RR mechanism to isolate the effect of assortativity on the misclassification error under the RR Mechanism. This choice of $\epsilon$ guarantees exact recovery for $a > 10.6008$ based on Theoreom~\ref {thm:rr_mechanism} . The simulation results are shown in Fig.\ref{fig:rr_exp:a}, where the gray shaded area indicates regions where exact recovery is impossible, $(\sqrt{a} - \sqrt{b})< 2^{h-1}$, the white region indicates regions where exact recovery is achievable only in the non-private scenario, and the green shaded area indicates regions where exact recovery is achievable for both non-private case and under the hyperedge DP scenario with Mechanism~\ref{mech:RR}. These regions show how much the exact recovery threshold moves w.r.t. $a$ under hyperedge DP with Mechanism ~\ref{mech:RR} for the first experiment, where $n=100$, $b=1$, and $\epsilon = 7$.

The simulation outcomes in Fig.~\ref{fig:rr_exp:a} illustrate that a larger $a$ reduces the probability of error across all scenarios. This observation supports the findings in Theorem~\ref{thm:rr_mechanism}, where larger values of $a$ increase the disparity between the distributions of in-cluster and cross-cluster hyperedges. As a result, the quantity $\sqrt{a+\lambda} - \sqrt{b+\lambda} $ is more likely to exceed the threshold $2^{h-1} = 4$. Note that $\lambda$ is a fixed positive scalar since $\epsilon$ is fixed. 
Experiments indicate that the estimation error for private scenarios surpasses that of non-private scenarios. This disparity arises from the introduction of excessive noise during the perturbation phase in observations.

The second experiment assesses the influence of the privacy budget $\epsilon$ on the estimation error within a random 3-uniform hypergraph. In a hypergraph with $n=100$, we fix the in-cluster and cross-cluster hyperedge probabilities to isolate the impact of privacy budget. We set $a=13$ and $b=1$, which guarantees exact recovery for the non-private scenario. Fig.~\ref{fig:rr_exp:eps} shows that as $\epsilon$ increases, the probability of error decreases. This occurs because larger values of $\epsilon$ reduce the flipping probabilities $\mu = 1/(1+e^\epsilon)$ during the hypergraph perturbation step. Higher $\epsilon$ reduces the noise introduced into the system by the privacy mechanism, and decreases the estimation error.   This can also be derived directly from Theorem~\ref{thm:rr_mechanism}. When $\epsilon \to \infty$, the setting tends to the non-private case. Theorem~\ref{thm:rr_mechanism} indicates that as $\epsilon$ increases, and accordingly $\lambda$ decreases, the quantity $(\sqrt{a+\lambda} - \sqrt{b+\lambda})^2$ increases for fixed $a$ and $b$. Consequently, this increases the likelihood of entering a region where exact recovery is Possible.  The green shaded regions in Fig.~\ref{fig:rr_exp:eps} show where exact recovery is possible based on the choice of $a$. To be more precise, ~\eqref{eq: rr_mechanism} implies that to guarantee exact recovery under hyperedge DP for the choice of $a=13$, the privacy budget should be bigger than $5.8611$.

\begin{figure}[htbp]
    \centering
    \subfloat[ Effect of assortativity on exact recovery threshold. Non-private setting: $b=1, n=100$, and the hyperedge DP setting: $\epsilon=7, b=1, n=100$ ]{\includegraphics[width=0.5\textwidth]{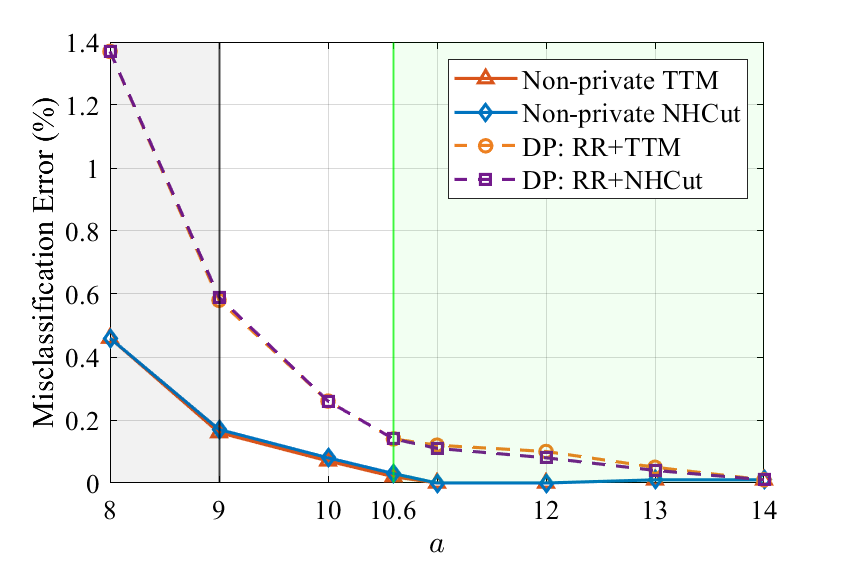}
    \label{fig:rr_exp:a}}
    \hfill
    \subfloat[ Effect of privacy budget on exact recovery threshold under the hyperedge DP for $a=13, b=1, n=100$ ]{\includegraphics[width=0.5\textwidth]{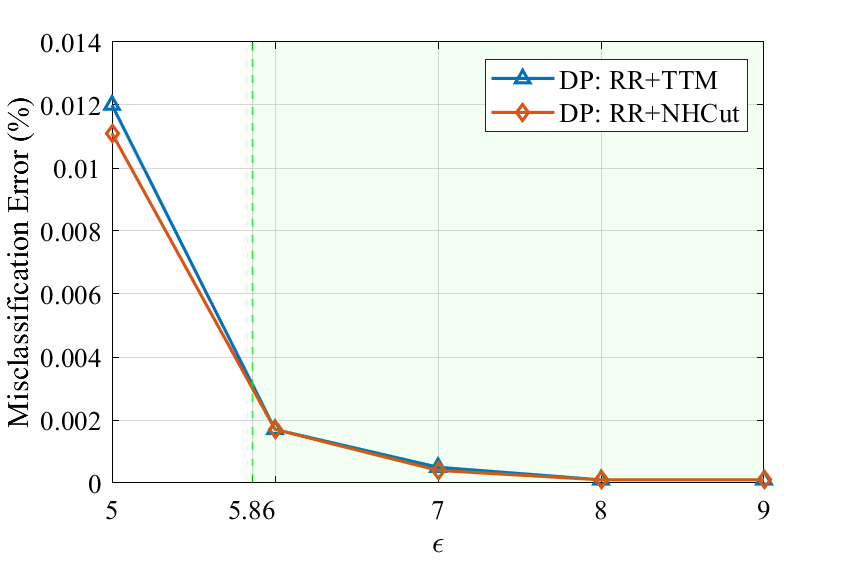}
    \label{fig:rr_exp:eps}}
    
    \caption{  Simulation results for a 3-uniform hypergraph. In (a), the gray region indicates where exact recovery is impossible, the white region indicates where exact recovery is achievable only in the non-private scenario, and the green region indicates where exact recovery is achievable in both the non-private and hyperedge DP scenarios. }
    \label{fig:rr_exps}
\end{figure}
\section{Conclusion}
This paper addresses the problem of identifying the exact recovery thresholds for \(h\)-HSBMs under hyperedge DP. Firstly, it introduces the notion of \((\epsilon, \delta)\)-hyperedge DP, and then it investigates how such privacy constraints can impact the information-theoretic limits of community detection compared to non-private cases. By exploring three differentially private mechanisms, including stability-based, randomized response, and sampling-based mechanisms, we demonstrate that the privacy budget \((\epsilon, \delta)\) causes the maximum likelihood estimator to fail in meeting the exact recovery requirements in regions where recovery is achievable in non-private scenarios. Additionally, we show that all mechanisms can provide \(\epsilon\)-hyperedge DP, with \(\delta = 0\), except for stability-based mechanisms. Furthermore, we prove that there are limitations on the privacy budget \((\epsilon, \delta)\), such as \(\epsilon \geq (0.5 - \log(\delta) / 2 \log(n)) \log(\frac{a}{b})\) for stability-based mechanisms, \(\epsilon \geq \log(\frac{a}{b})\) for sampling mechanisms, and \(\epsilon = \Omega_n { \big( \log (n) \big)}\) for randomized response mechanism. 
%
\appendices

\label{appendix}

\section{Proof of Theorem~\ref{thm:dist_mle_2}}
\label{prf:thm:dist_mle_2}
We define the error event $E$, and the two types of output events of $\mathcal{M}_{dist}$, as follows:
\begin{align*}
E &= \{ \mathcal{M}_{dist}(H;\hat{\boldsymbol{\sigma}}_{\text{ML}})\neq \boldsymbol{\sigma}^\ast \}\\
O &= \{ \mathcal{M}_{dist}(H;\hat{\boldsymbol{\sigma}}_{\text{ML}}) = \hat{\boldsymbol{\sigma}}_{\text{ML}} \} \\
R &= \{ \mathcal{M}_{dist}(H;\hat{\boldsymbol{\sigma}}_{\text{ML}}) = \perp\}
\end{align*}
The error probability of $\mathcal{M}_{dist}$ is: 
\begin{align}
\label{eq:dist_alg_error}
&Pr(E) \nonumber\\
&= Pr (E|R) \cdot Pr(R)+ Pr(E|O)\cdot Pr(O)\nonumber\\
&\le Pr(R)+ Pr(E,O)\nonumber\\
&\le Pr(R)+ Pr(\hat{\boldsymbol{\sigma}}_{\text{ML}} \neq \boldsymbol{\sigma}^\ast)\nonumber\\
& = Pr\Big(d({H};{\hat{\boldsymbol{\sigma}}}) + \text{Lap}(1/\epsilon) < \frac{\log(1/\delta)}{\epsilon} \Big) + Pr(\hat{\boldsymbol{\sigma}}_{\text{ML}} \neq \boldsymbol{\sigma}^\ast) \nonumber \\
& =  Pr\Big(\text{Lap}(1/\epsilon) < \frac{\log(1/\delta)}{\epsilon} - d({H};{\hat{\boldsymbol{\sigma}}})\Big)  + Pr(\hat{\boldsymbol{\sigma}}_{\text{ML}} \neq \boldsymbol{\sigma}^\ast) 
\end{align}
where the randomness comes from $h$-HSBM and $\text{Lap}(1/\epsilon)$. For any positive $t$, let $\delta \triangleq n^{-t}$. Then:
\begin{align}
\label{eq:prob_outputting_null}
&Pr\Big(\text{Lap}(1/\epsilon) < \frac{\log(1/\delta)}{\epsilon} - d({H};{\hat{\boldsymbol{\sigma}}})\Big) \nonumber \\
& \leq Pr\Big(d({H};{\hat{\boldsymbol{\sigma}}}) \geq  \frac{t+1}{\epsilon} \log(n) \Big) \nonumber \\
& \quad \qquad \times Pr\Big(\text{Lap}(1/\epsilon) < \frac{(t+1)\log(n)-\log(\delta)}{\epsilon} \Big) \nonumber \\
& \qquad  + Pr\Big( d({H};{\hat{\boldsymbol{\sigma}}}) <   \frac{t+1}{\epsilon} \log(n) \Big) \nonumber \\ 
& \leq Pr\Big(d({H};{\hat{\boldsymbol{\sigma}}}) <   \frac{t+1}{\epsilon} \log (n) \Big) + o_n(1) 
\end{align}
Let $\Psi(H;\boldsymbol{\hat \sigma})$ show the set of cross-cluster hyperedges in hypergraph $H$ w.r.t. labeling $\boldsymbol{\hat \sigma}$. Define:
\begin{align*}
  \Tilde{d} &\triangleq \min_{\boldsymbol{\hat \sigma}\neq \hat{\boldsymbol{\sigma}}_{\text{ML}}}  |\Psi(H;\boldsymbol{\hat \sigma})| - |\Psi(H;\hat{\boldsymbol{\sigma}}_{\text{ML}})|\\
   \boldsymbol{\sigma}_{\min} &\triangleq \arg \min_{\boldsymbol{\hat \sigma}\neq \hat{\boldsymbol{\sigma}}_{\text{ML}}}  |\Psi(H;\boldsymbol{\hat \sigma})| 
\end{align*}
By the law of total probability and using Lemma~\ref{lem:dhq_lb} we get:
\begin{align}
&Pr \big( d(H;\hat{\boldsymbol{\sigma}}) < \frac{t+1}{\epsilon} \log(n) \big) \nonumber\\
&\leq  Pr \big(\Tilde{d} < {\frac{t+1}{\epsilon} \log(n)} \big) \nonumber \\
& = Pr \big( \Tilde{d} < {\frac{t+1}{\epsilon} \log(n)} {\big |} \hat{\boldsymbol{\sigma}}_{\text{ML}} = \boldsymbol{\sigma}^\ast \big) Pr (\hat{\boldsymbol{\sigma}}_{\text{ML}} = \boldsymbol{\sigma}^\ast)  \nonumber\\
& \quad +  Pr \big(\Tilde{d} < {\frac{t+1}{\epsilon} \log(n)} {\big |} \hat{\boldsymbol{\sigma}}_{\text{ML}} \neq \boldsymbol{\sigma}^\ast \big) \big( 1- Pr(\hat{\boldsymbol{\sigma}}_{\text{ML}} =\boldsymbol{\sigma}^\ast) \big) \nonumber \\ 
& \leq Pr \big( \Tilde{d} < {\frac{t+1}{\epsilon} \log(n)}\big) + o_n(1)
\label{eq: prb_stab_lower}
\end{align}
where the inequality holds when:
\begin{align}
 \label{eq:exact_hyp}
  (\sqrt{a} - \sqrt{b})^2 > 2^{h-1} \Longrightarrow Pr (\hat{\boldsymbol{\sigma}}_{\text{ML}} = \boldsymbol{\sigma}^\ast) = 1-o_n(1)
\end{align}

Let $\boldsymbol{\hat \sigma} \triangleq (A, B)$ where:
\begin{align*}
   &A \triangleq \{ v \in \mathcal{V}: \quad{ \hat \sigma}_v = + \}\\
   &B \triangleq \{ v \in \mathcal{V}: \quad { \hat \sigma}_v = - \} 
\end{align*}
Then, $\boldsymbol{\sigma}^\ast = (A^\ast, B^\ast)$ and $ \boldsymbol{\sigma}_{\min} =(A_{min},B_{min})$. Assume $S_{1} \subseteq A^\ast$ and $S_{2} \subseteq B^\ast$ are the set of misclassified nodes in $A_{min}$ and $ B_{min}$ with respect to $A^\ast$ and $B^\ast$, respectively. The same-size communities assumption implies $|S_{1}| = |S_{2}|$. Then: 
\begin{align}
\label{eq:binom_s_dist}
&Pr\Big( |\Psi(H;\boldsymbol{\sigma}_{\min})| - |\Psi(H;\boldsymbol{\sigma}^\ast)| \leq {\frac{t+1}{\epsilon} \log(n)} {\Big |} |S_{1}| = s \Big) \nonumber \\
& = Pr\Big( X_{1}^{(m)} - X_{2}^{(m)} < \frac{t+1}{\epsilon} \log(n) \Big)
\end{align}
where ${X}_{1}^{(m)} \sim \mathrm{Binom}(m, p)$, ${X}_{2}^{(m)} \sim \mathrm{Binom}(m, q)$, and:
\begin{align}
 \label{eq:m}
 m = 2 \sum\limits_{i=1}^{\min(h-1,s)} \binom{s}{i} \binom{n-s}{h-i}
\end{align}
Lemma~\ref{lem:binom_dist_privacy} gives:

\begin{align}
    &Pr\bigg( X_{1}^{(m)} - X_{2}^{(m)} < \frac{(t+1) \log(n)}{\epsilon} \bigg) \nonumber \\
    &\qquad \qquad \qquad \leq   n^{-  \Big( \frac{ 4s (1-\frac{s}{n})}{2^{h-1}} \mu-\frac{t+1}{2\epsilon} \log{( \frac{a}{b})} \Big)}
\end{align}
where:
\begin{align}
\label{mu_def_proof}
    \mu = a+b- 2\sqrt{ab} \times \sqrt{ \frac{(t+1)^2}{16ab\epsilon^2}  \big(\frac {h}{h-1} \big)^{2h-2}+1}
\end{align}
Applying the union bound for  $\mu > 2^{h-1} $ results in:
\begin{align}
\label{eq:union_error}
&Pr \Big( |\Psi(H;\boldsymbol{\sigma}_{\min})| - |\Psi(H;\boldsymbol{\sigma}^\ast) | < \frac{t+1}{\epsilon} \log(n) \Big) \nonumber \\
&\leq \sum\limits_{s=1}^{n/2} {\binom{n}{s}}^{2} \times n^ {-  \Big[\frac{ 4s (1-\frac{s}{n})}{2^{h-1}} \mu-\frac{t+1}{2\epsilon} \log{( \frac{a}{b})} \Big]} \nonumber \\ 
& \overset{(a)}{\leq} \sum\limits_{s=1}^{n/2} \bigg( \frac{n e}{s}\bigg)^{2s} \times n^ {-  \Big[ \frac{ 4s (1-\frac{s}{n})}{2^{h-1}} \mu-\frac{t+1}{2\epsilon} \log{( \frac{a}{b})} \Big]} \nonumber\\
&= \sum\limits_{s=1}^{n/2} e^{2s \log ( \frac{n e}{s}) } \times e^ {-\log(n)  \Big[ \frac{ 4s (1-\frac{s}{n})}{2^{h-1}} \mu-\frac{t+1}{2\epsilon} \log{( \frac{a}{b})} \Big]} \nonumber \\
& \overset{(b)}{<} n^{-2+\frac{t+1}{2\epsilon} \log{( \frac{a}{b})}} \times \sum\limits_{s=1}^{n/2}  e^ {-  2s \big[ \log(s)-\frac{2s}{n} \log(n) -1 \big] } \nonumber \\
& \overset{(c)}{=} n^{-2 + \frac{t+1}{2\epsilon} \log{( \frac{a}{b})}} \times \sum\limits_{s=1}^{n/2}   e^ {-  \frac{2s}{3} (\log(s) - 3) } \nonumber \\
& \overset{(d)}{=} n^{-2 + \frac{t+1}{2\epsilon} \log{( \frac{a}{b})}} \times O_n(1) \nonumber \\
&\overset{(e)}{=} o_n(1)
\end{align}
where (a) is due to Stirling's approximation, (b) holds since $\mu > 2^{h-1}$ and $e^{-2} \geq e^{-2s}$ for all $s \in [1, n/2]$, (c) derived from by $ \log(2s) - \frac{2s}{n} \log(n) \geq \frac{1}{3} \log(2s)$ for sufficiently large $n$ and any $ k \in [1, n/4]$, and (d) is the direct result of $\sum\limits_{s=1}^{n/2}   e^ {-  \frac{2s}{3} (\log(s) - 3) } = O_n(1)$ according to a result of~\cite{abbe2015exact}. In~\eqref{eq:union_error}, step (e) is obtained by assuming:
\[ \frac{t+1}{2\epsilon} \log (\frac{a}{b}) \leq 1 \]

\section{Proof of Theorem~\ref{thm:rr_mechanism}}
\label{prf:them:rr_mechanism}
    First, we find the in-cluster and cross-cluster hyperedge probabilities in $\tilde H$. Recall ~\eqref{eq:inc_crc_prb} and~\eqref{mu_rr}, then:
\[
\begin{cases} \Tilde{p} = (1-\nu)p+ \nu (1-p) = \frac{1 + p (e^\epsilon - 1)}{e^\epsilon + 1}, &\text{in-cluster} \\
\\
 \Tilde{q} = (1-\nu)q+\nu (1-q) =  \frac{1 + q (e^\epsilon - 1)}{e^\epsilon + 1}, & \text{cross-cluster} \end{cases}
\]
Now, we have three different regimes to study based on the privacy budget $\epsilon$. 
\begin{itemize}
    \item If $\epsilon = o_n \big( \log (n) \big)  $, then $\Tilde{p} = \Tilde{q}$ asymptotically. This means no recovery.
    \item If $\epsilon = O_n \big( \log (n) \big) $, then $\Tilde{p} = p$ and $\Tilde{q} = q$ asymptotically. This implies that the recovery bounds for private and non-private cases are identical.
    \item If $\epsilon = (h-1) \log(n)$ such that $e^{-\epsilon} = \frac{\lambda \log(n)}{\binom{n-1}{h-1}}$, then:
    \begin{align}
    \begin{cases} \Tilde{p} &=  (a + \lambda ) \frac{\log (n)}{\binom{n-1}{h-1}} \\
    \Tilde{q} &= (b + \lambda ) \frac{\log (n)}{\binom{n-1}{h-1}} \end{cases}
    \end{align}
\end{itemize}
where $\lambda = \Omega_n(1)$. Then, $(\sqrt{a+\lambda}-\sqrt{b+\lambda}\,)^2 > 2^{h-1}$ is the recovery bound. 

\section{Proof of Theorem~\ref{thm:bayes_samp}}
\label{prf:thm:bayes_samp}
For a hypergraph $H$, a perturbed hypergraph $\tilde H$ is considered where  $|\mathcal{W}_H \Delta \mathcal{W}_{{\tilde H}}| = 1$, and the difference is the hyperedge $w$. Then:
\begin{align}
\label{eq:tildH}
    Pr({\tilde H}) =& \sum\limits_{\boldsymbol{\sigma'}: w\in \Psi(H;\boldsymbol{\sigma'})} Pr(\boldsymbol{\sigma'}) Pr({H}|\boldsymbol{\sigma'})  \frac{Pr({\tilde H}|\boldsymbol{\sigma'})}{Pr({H}|\boldsymbol{\sigma'})} \nonumber \\
    &+ \sum\limits_{\boldsymbol{\sigma'}: w\notin \Psi(H;\boldsymbol{\sigma'})}Pr(\boldsymbol{\sigma'})Pr({H}|\boldsymbol{\sigma'}) \frac{Pr({\tilde H}|\boldsymbol{\sigma'})}{Pr({H}|\boldsymbol{\sigma'})} \nonumber \\
    =& \sum\limits_{\boldsymbol{\sigma'}: w\in \Psi(H;\boldsymbol{\sigma'})} Pr(\boldsymbol{\sigma'}) Pr({H}|\boldsymbol{\sigma'})  \frac{1-q}{q} \nonumber \\
    &+ \sum\limits_{\boldsymbol{\sigma'}: w\notin \Psi(H;\boldsymbol{\sigma'})}Pr(\boldsymbol{\sigma'})Pr({H}|\boldsymbol{\sigma'}) \frac{1-p}{p} \nonumber \\
    \leq& \;\frac{1-q}{q} \quad Pr(H)
\end{align}
where the last inequality holds because $\frac{1-q}{q} \geq \frac{1-p}{p}$. Similarly:
\begin{align}
\label{eq:H}
    Pr({ H}) 
    =& \sum\limits_{\boldsymbol{\sigma'}: w\in \Psi(H;\boldsymbol{\sigma'})} Pr(\boldsymbol{\sigma'}) Pr({\tilde H}|\boldsymbol{\sigma'})  \frac{q}{1-q} \nonumber \\
    &+ \sum\limits_{\boldsymbol{\sigma'}: w\notin \Psi(H;\boldsymbol{\sigma'})} Pr(\boldsymbol{\sigma'}) Pr({\tilde H}|\boldsymbol{\sigma'}) \frac{p}{1-p} \nonumber \\
    \leq& \;\frac{p}{1-p} \quad Pr(\tilde H)
\end{align}

Now, we begin by considering the case when the graph $H$ is perturbed via an in-cluster hyperedge $w$. Then:
\begin{align}
\label{eq:H_tildH_inc}
\frac{Pr(\boldsymbol{\sigma}| { H})}{Pr(\boldsymbol{\sigma}|{\tilde H})} 
&= \frac{Pr({ H}|\boldsymbol{\sigma})}{Pr( {\tilde H}|\boldsymbol{\sigma})} \frac{Pr( {\tilde H})}{Pr( H)} \nonumber\\
& = \frac{p}{1-p} \quad \frac{Pr( { \tilde H})}{Pr( H)} \nonumber \\
& \overset{(a)}{\leq} \frac{p}{1-p} \quad \frac{1-q}{q} \nonumber\\
& =  e^{\log{ \left( \frac{p (1-q)}{q(1-p)} \right)}}
\end{align}
where (a) holds due to~\eqref{eq:tildH}. In addition:
\begin{align}
\label{eq:tildH_H_inc}
\frac{Pr(\boldsymbol{\sigma}| {\tilde H})}{Pr(\boldsymbol{\sigma}|{H})} 
& = \frac{1-p}{p} \quad \frac{Pr( { H})}{Pr( {\tilde H})} \nonumber \\
& \leq 1
\end{align}
where the last inequality comes from~\eqref{eq:H}.


Now similarly, if $w$ is a cross-cluster hyperedge:
\begin{align}
\label{eq:tildeH_H_crc}
\frac{Pr(\boldsymbol{\sigma}|{\tilde{H}})}{Pr(\boldsymbol{\sigma}|{H})} 
& = \frac{1-q}{q} \quad \frac{Pr( { H})}{Pr({\tilde H})} \nonumber \\ 
&\overset{(a)}{\leq} \frac{p(1-q)}{q(1-p)} \nonumber\\
&=e^{\log{ \left( \frac{p (1-q)}{q(1-p)} \right)}}
\end{align}
where (a) holds due to~\eqref{eq:H}. In addition, we can similarly show:
\begin{align}
\label{eq:H_tildeH_crc}
  \frac{Pr(\boldsymbol{\sigma}|{H})}{Pr(\boldsymbol{\sigma}|{\tilde{H}})}  \leq 1
\end{align}
Whether $w$ is in-cluster or cross-cluster hyperedge, \eqref{eq:H_tildH_inc}-\eqref{eq:H_tildeH_crc} show the Bayesian sampling mechanism satisfies $\epsilon$-hyperedge DP for all $\epsilon$ such that $\epsilon \geq  \log \bigg( \frac{p(1-q)}{q(1-p)} \bigg)$. Asymptotically, this gives $\epsilon \geq  \log (\frac{a}{b}) = \epsilon_{0}$.

Now, we analyze the error probability of the Bayesian mechanism. For a fixed hypergraph $H$, our goal is to show: 
 \begin{align}   \frac{Pr(\hat{\boldsymbol{\sigma}}_{\text{Bayes}} \neq \boldsymbol{\sigma}^\ast)}{Pr(\boldsymbol{\sigma}^\ast|{H})} & = \frac{\sum\limits_{\boldsymbol{\sigma} \neq \boldsymbol{\sigma}^\ast} Pr(\boldsymbol{\sigma} | {H})}{Pr(\boldsymbol{\sigma}^\ast | {H})} \nonumber \\ 
     & =  \frac{\sum\limits_{\boldsymbol{\sigma} \neq \boldsymbol{\sigma}^\ast} Pr({H} | \boldsymbol{\sigma} )}{Pr({H} | \boldsymbol{\sigma}^\ast)} \nonumber\\ 
     & \leq o_n(1) 
\end{align}
which is equivalent to $Pr(\hat{\boldsymbol{\sigma}}_{\text{Bayes}} \neq \boldsymbol{\sigma}^\ast) \leq o_n(1)$. Recall $S_{1}$ and $S_{2}$ as the misclassified labels and both of size $s$, where $s \in [1, \frac{n}{2}]$, and  ${X}_{1}^{(m)} \sim \mathrm{Binom}(m, p)$, ${X}_{2}^{(m)} \sim \mathrm{Binom}(m, q)$ such that:
\begin{align*}
 m &= 2 \sum\limits_{i=1}^{\min(h-1,s)} \binom{s}{i} \binom{n-s}{h-i}
\end{align*}
Therefore, applying a similar error analysis executed in the stability-based methods: 
\begin{align}
\label{eq: Bayes_error_sampling}
&\frac{Pr(\hat{\boldsymbol{\sigma}}_{\text{Bayes}} \neq \boldsymbol{\sigma}^\ast | {H})}{Pr(\boldsymbol{\sigma}^\ast|{H})}  \leq \sum\limits_{s=1}^{\frac{n}{2}} { \binom{n}{s}}^{2}  \times e^{\frac{a-b}{a} ( X_2^{(m)} - X_1^{(m)})} \nonumber \\
&\overset{(a)}{\leq} \sum\limits_{s=1}^{\frac{n}{2}} \bigg( \frac{ne}{s}\bigg)^{2s}  \times e^{\frac{a-b}{a} ( X_2^{(m)} - X_1^{(m)})} \nonumber \\
& = \sum\limits_{s=1}^{\frac{n}{2}} e^{- 2s( \log(s)  - 1  ) } \times n^{- \bigg( \frac{ m ( a-b )^2}{a\binom{n-1}{h-1}}  -2s\bigg) }  \nonumber \\
& \overset{(b)}{\leq}  o_n(1)
\end{align}
where (a) follows from Stirling's approximation, and (b) holds if:
\begin{align}
\label{eq: suff_expo}
    \frac{( a-b )^2}{a} \frac{m}{\binom{n-1}{h-1}} & \geq 2s\\
\end{align}
Based on Lemma~\ref{lem:m&n_m}:
\begin{align*}
    \frac{( a-b )^2}{a} \frac{m}{\binom{n-1}{h-1}} \geq 2s \frac{( a-b )^2}{a} (1-\frac{s}{n})^{h-1}
\end{align*}
therefore, \eqref{eq: suff_expo} automatically holds if:
\begin{align*}
    ( a-b ) \frac{a-b}{a} (1-\frac{s}{n})^{h-1} & \geq 1 \\
    & \geq (1-\frac{s}{n})^{1-h}\\
    & \geq 2^{h-1}
\end{align*}
Since $ \frac{b}{a} = e^{-\epsilon_0}$, this completes the proof.

\section{Proof of Theorem~\ref{thm:exponential_samp}}
\label{prf:thm:exponential_samp}
Recall $S_{1}$ and $S_{2}$ as the misclassified labels and both of size $s$, where $s \in [1, \frac{n}{2}]$, and  ${X}_{1}^{(m)} \sim \mathrm{Binom}(m, p)$, ${X}_{2}^{(m)} \sim \mathrm{Binom}(m, q)$ such that:
\begin{align*}
 m &= 2 \sum\limits_{i=1}^{\min(h-1,s)} \binom{s}{i} \binom{n-s}{h-i}
\end{align*}
Now, we analyze the error probability of $\mathcal{M}_{\operatorname{Expo.}}(H)$. Similar to the Bayesian mechanism: 
  \begin{align}
  \label{eq:expo_in_ratio}
    \frac{e^{-\epsilon |\Psi(H;\boldsymbol{\sigma})|}}{e^{-\epsilon | \Psi(H;\boldsymbol{\sigma}^\ast) |}} &= \frac{e^{-\epsilon \big[| \Psi(H;\boldsymbol{\sigma}^\ast) | + X_1^{(m)} - X_2^{(m)}\big]}}{e^{-\epsilon | \Psi(H;\boldsymbol{\sigma}^\ast) | }} \nonumber \\
      & = e^{-\epsilon ( X_1^{(m)} - X_2^{(m)})} 
  \end{align}
Therefore:
\begin{align}
\label{eq:expo_ratio}
     &\frac{Pr(\hat{\boldsymbol{\sigma}}_{\text{Expo.}} \neq \boldsymbol{\sigma}^\ast | {H})}{Pr(\boldsymbol{\sigma}^\ast|{H})}   \leq  \sum_{s=1}^{n/2} {\binom{n}{s}}^{2}  e^{-\epsilon ( X_1^{(m)} - X_2^{(m)})} \nonumber \\
      & \leq \sum\limits_{s=1}^{\frac{n}{2}}  e^{- 2s ( \log(s)  - 1  ) } \times n^{- \bigg( \frac{ m ( a-b ) \epsilon }{\binom{n-1}{h-1}}  -2s \bigg) }   \nonumber \\
      & \leq \sum\limits_{s=1}^{\frac{n}{2}}  e^{- 2s ( \log(s)  - 1  ) } \times n^{- \big( \frac{  (a-b ) \epsilon }{2^{h-1}}  -2s \big) }   \nonumber \\
      & \leq o_n(1)
\end{align}
where the last inequality holds if $a - b > \frac{2^{h-1}}{\epsilon}$, and this completes the proof.
\section{Auxiliary Lemmas}
\begin{lemma}
\label{lem:dhq_lb} 
For any random $h$-uniform hypergraph $H$ and any estimator $\hat{\boldsymbol{\sigma}}(H)$, the distance to instability satisfies:
\begin{align}
 d({H},{\hat{\boldsymbol{\sigma}}}) \geq  \min_{\boldsymbol{\hat \sigma}\neq \hat{\boldsymbol{\sigma}}_{\text{ML}}} |\Psi(H;\boldsymbol{\hat \sigma})| - |\Psi(H;\hat{\boldsymbol{\sigma}}_{\text{ML}})|
 \label{eq:DTI-bound}
\end{align}
where $\boldsymbol{\hat \sigma}_{\text{ML}}$ is the community vector recovered by the maximum likelihood estimator on hypergraph $H$.
\end{lemma}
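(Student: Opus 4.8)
Throughout, the estimator appearing in the bound is the maximum‑likelihood estimator $\hat{\boldsymbol{\sigma}}_{\text{ML}}$ used inside Mechanism~\ref{mech:dist} (this is the only choice for which the right‑hand side of \eqref{eq:DTI-bound} is a meaningful lower bound). The plan is to first replace ``likelihood'' by ``number of cross‑cluster hyperedges'' and then run a one‑line Lipschitz argument on the gap between $\hat{\boldsymbol{\sigma}}_{\text{ML}}$ and its nearest competitor. For the first step I would write out $\log\Pr(H\mid\boldsymbol{\sigma})$ and use that, for a balanced labeling $\boldsymbol{\sigma}$, the total number of in‑cluster $h$‑subsets $2\binom{n/2}{h}$ is independent of $\boldsymbol{\sigma}$, and that $|\mathcal{W}_H|$ (the number of present hyperedges) is independent of $\boldsymbol{\sigma}$ as well; substituting these shows $\log\Pr(H\mid\boldsymbol{\sigma}) = -\,|\Psi(H;\boldsymbol{\sigma})|\log\frac{p(1-q)}{q(1-p)}$ up to a $\boldsymbol{\sigma}$‑independent additive constant, and $\log\frac{p(1-q)}{q(1-p)}>0$ because $p>q$. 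Hence $\hat{\boldsymbol{\sigma}}_{\text{ML}}(H)\in\arg\min_{\boldsymbol{\sigma}}|\Psi(H;\boldsymbol{\sigma})|$, and the same conclusion holds verbatim for any hypergraph obtained from $H$ by flipping hyperedges, since a flip changes neither the in‑cluster slot count nor the labeling‑independence of the total edge count.

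Next I would invoke Definition~\ref{def:dist2inst}: setting $k = d(H;\hat{\boldsymbol{\sigma}}_{\text{ML}})$, there is a witness $\widetilde H$ reachable from $H$ by at most $k$ single‑hyperedge flips with $\hat{\boldsymbol{\sigma}}_{\text{ML}}(\widetilde H)\neq\hat{\boldsymbol{\sigma}}_{\text{ML}}(H)$. Write $\boldsymbol{\sigma}_1 = \hat{\boldsymbol{\sigma}}_{\text{ML}}(H)$ and $\boldsymbol{\sigma}_2 = \hat{\boldsymbol{\sigma}}_{\text{ML}}(\widetilde H)$, and track the gap $g(G)\triangleq |\Psi(G;\boldsymbol{\sigma}_2)| - |\Psi(G;\boldsymbol{\sigma}_1)|$ as $G$ moves from $H$ to $\widetilde H$ one flip at a time. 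The crucial claim is that each flip changes $g$ by at least $-1$: an insertion can only weakly increase both $|\Psi(\cdot;\boldsymbol{\sigma}_1)|$ and $|\Psi(\cdot;\boldsymbol{\sigma}_2)|$, so it never decreases $|\Psi(\cdot;\boldsymbol{\sigma}_2)|$ and hence lowers $g$ by at most the unit increase it may cause in $|\Psi(\cdot;\boldsymbol{\sigma}_1)|$; symmetrically a deletion can only weakly decrease both counts, so it never increases $|\Psi(\cdot;\boldsymbol{\sigma}_1)|$ and again lowers $g$ by at most $1$. Summing over the at most $k$ flips gives $g(\widetilde H)\ge g(H)-k$.

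Finally I would evaluate the two endpoints. Because $\boldsymbol{\sigma}_2\neq\boldsymbol{\sigma}_1 = \hat{\boldsymbol{\sigma}}_{\text{ML}}(H)$, the definition of the right side of \eqref{eq:DTI-bound} gives $g(H) = |\Psi(H;\boldsymbol{\sigma}_2)| - |\Psi(H;\hat{\boldsymbol{\sigma}}_{\text{ML}})| \ge \min_{\boldsymbol{\sigma}\neq\hat{\boldsymbol{\sigma}}_{\text{ML}}}|\Psi(H;\boldsymbol{\sigma})| - |\Psi(H;\hat{\boldsymbol{\sigma}}_{\text{ML}})|$, and because $\boldsymbol{\sigma}_2$ minimizes $|\Psi(\widetilde H;\cdot)|$ by the first step, $g(\widetilde H) = |\Psi(\widetilde H;\boldsymbol{\sigma}_2)| - |\Psi(\widetilde H;\boldsymbol{\sigma}_1)| \le 0$. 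Chaining $0 \ge g(\widetilde H) \ge g(H) - k$ yields $k \ge g(H) \ge \min_{\boldsymbol{\sigma}\neq\hat{\boldsymbol{\sigma}}_{\text{ML}}}|\Psi(H;\boldsymbol{\sigma})| - |\Psi(H;\hat{\boldsymbol{\sigma}}_{\text{ML}})|$, which is exactly \eqref{eq:DTI-bound}. The step that needs the most care is the per‑flip monotonicity of $g$: it works precisely because insertion and deletion are the only two moves and each pins the sign of the change in one of the two counts, and it must be combined with a little bookkeeping around the global sign ambiguity $\boldsymbol{\sigma}\leftrightarrow-\boldsymbol{\sigma}$ so that ``$\hat{\boldsymbol{\sigma}}_{\text{ML}}(\widetilde H)\neq\hat{\boldsymbol{\sigma}}_{\text{ML}}(H)$'' genuinely supplies a $\boldsymbol{\sigma}_2$ lying outside the minimizing class of $|\Psi(H;\cdot)|$; the reduction of ML to cross‑cluster minimization is routine but must be stated, since it is what lets minimality be used at both $H$ and $\widetilde H$.
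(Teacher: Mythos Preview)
Your proof is correct and rests on the same Lipschitz idea as the paper (hyperedge flips change cross-cluster counts by at most one), but you package it more carefully. The paper proceeds by contraposition: it bounds $|\Psi(\tilde H;\hat{\boldsymbol{\sigma}}_{\text{ML}}(H))|\le|\Psi(H;\hat{\boldsymbol{\sigma}}_{\text{ML}}(H))|+d(H;\hat{\boldsymbol{\sigma}}_{\text{ML}})<\min_{\boldsymbol{\sigma}\neq\hat{\boldsymbol{\sigma}}_{\text{ML}}}|\Psi(H;\boldsymbol{\sigma})|$ and then simply asserts this forces $\hat{\boldsymbol{\sigma}}_{\text{ML}}(\tilde H)=\hat{\boldsymbol{\sigma}}_{\text{ML}}(H)$; but that displayed inequality compares $\Psi(\tilde H;\cdot)$ on the left to $\Psi(H;\cdot)$ on the right, so the implication is not actually justified, and applying the one-sided Lipschitz bound separately to each labeling would only yield $d\ge\tilde d/2$. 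Your device of tracking the gap $g(G)=|\Psi(G;\boldsymbol{\sigma}_2)|-|\Psi(G;\boldsymbol{\sigma}_1)|$ and noting that a single insertion (respectively deletion) moves \emph{both} counts weakly up (respectively down)---so $g$ is $1$-Lipschitz per flip, not $2$-Lipschitz---is exactly what is needed to recover the full constant and make the final step rigorous. You also spell out the reduction from maximum likelihood to cross-cluster minimization for balanced labelings, which the paper uses tacitly; this is routine but worth stating. The caveat you flag about the global sign flip is well placed, since $|\Psi(H;\boldsymbol{\sigma})|=|\Psi(H;-\boldsymbol{\sigma})|$ and the minimum in \eqref{eq:DTI-bound} must be read modulo that symmetry.
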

\begin{proof}
Based on Definition~\ref{def:dist2inst}, $ d({H},\boldsymbol{\hat \sigma}_{\text{ML}})$ indicates the minimum distance hypergraph $H$ and a neighboring hypergraph $\tilde H$ have such that $\boldsymbol{\hat \sigma}_\text{ML}(H) \neq \boldsymbol{\hat \sigma}_\text{ML}(\tilde{H})$. Accordingly:
\begin{align*}
    d({H},\boldsymbol{\hat \sigma}_{\text{ML}}) = |\mathcal{W}_H \Delta \mathcal{W}_{\tilde{H}}|
\end{align*}
Recall $\Psi(\cdot,\cdot)$ was defined  as the set of   cross-cluster hyperedges for a given hypergraph and set of communities. By contraposition, assume Eq.~\eqref{eq:DTI-bound} is violated, i.e., for some hypergraph $H$ and its maximum likelihood community estimate $\boldsymbol{\hat\sigma}_{\text{ML}}$:
\begin{align*}
 d({H},\boldsymbol{\hat \sigma}_{\text{ML}}) <\min_{\boldsymbol{\hat \sigma}\neq \hat{\boldsymbol{\sigma}}_{\text{ML}}} | \Psi(H;\boldsymbol{\hat \sigma}) | - | \Psi(H;\hat{\boldsymbol{\sigma}}_{\text{ML}}) |
\end{align*}
It follows that:
\begin{align}
| \Psi(\tilde{H};\boldsymbol{\hat \sigma}_\text{ML}(H))| 
&\leq | \Psi(H;\boldsymbol{\hat \sigma}_\text{ML}(H)) | + |\mathcal{W}_H \Delta \mathcal{W}_{\tilde{H}}| \nonumber \\
& = | \Psi(H;\boldsymbol{\hat \sigma}_\text{ML}(H)) | + d({H},\boldsymbol{\hat \sigma}_{\text{ML}}) \nonumber \\
& < \min_{\boldsymbol{\hat \sigma}\neq \hat{\boldsymbol{\sigma}}_{\text{ML}}} | \Psi(H;\boldsymbol{\hat \sigma} (H) ) |
 \label{eq:k_stable_lower_bound}
\end{align}
But then, the last inequality implies $\boldsymbol{\hat \sigma}_\text{ML}(\tilde{H}) = \boldsymbol{\hat \sigma}_\text{ML}(H)$, which contradicts the assumption of the lemma. This completes the proof.
\end{proof}
\begin{lemma}
    \label{lem:binom_dist_privacy}
    For $m = 2 \sum\limits_{i=1}^{\min(h-1,s)} \binom{s}{i} \binom{n-s}{h-i}$, where $s \in [n/2]$, and $t$ and $\epsilon$ are  positive scalars:
    \begin{align*}
        &Pr \Big(\mathrm{Binom}(m, p) - \mathrm{Binom}(m, q) < \frac{t+1}{\epsilon} \log(n) \Big) \nonumber \\
        & \qquad \leq n^{-  \Big[ \frac{ 4s (1-\frac{s}{n})}{2^{h-1}} \mu-\frac{t+1}{2\epsilon} \log{( \frac{a}{b})} \Big]}
    \end{align*}
where $p = \frac{a \log(n)}{\binom{n-1}{h-1}}$, $q=\frac{b \log(n)}{\binom{n-1}{h-1}}$, for  $a \geq b >0$, and:
\begin{align*}
    \mu = a+b- 2\sqrt{ab} \times \sqrt{ \frac{(t+1)^2}{16ab\epsilon^2}  \big(\frac {h}{h-1} \big)^{2h-2}+1}
\end{align*}

\end{lemma}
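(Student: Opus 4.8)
The plan is a Chernoff (exponential-Markov) bound on the difference of the two independent binomials, followed by substituting the dense-regime scalings of $p,q$ and invoking the combinatorial lower bound on $m$. Write $X_1\sim\mathrm{Binom}(m,p)$ and $X_2\sim\mathrm{Binom}(m,q)$ independently, and put $\tau\triangleq\frac{t+1}{\epsilon}\log n$. For every $\theta>0$,
$$
Pr(X_1-X_2<\tau)=Pr\big(e^{\theta(X_2-X_1)}>e^{-\theta\tau}\big)\le e^{\theta\tau}\,\mathbb{E}[e^{-\theta X_1}]\,\mathbb{E}[e^{\theta X_2}]=e^{\theta\tau}(1-p+pe^{-\theta})^m(1-q+qe^{\theta})^m,
$$
and applying $1+x\le e^x$ twice (to $-p(1-e^{-\theta})$ and to $q(e^{\theta}-1)$) gives $Pr(X_1-X_2<\tau)\le e^{f(\theta)}$ with $f(\theta)=\theta\tau-mp(1-e^{-\theta})+mq(e^{\theta}-1)$.

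Next I would minimize $f$ over $\theta>0$. Since $f'(\theta)=\tau-mpe^{-\theta}+mqe^{\theta}$, the minimizer solves $mq(e^{\theta})^2+\tau e^{\theta}-mp=0$, so $e^{\theta^\star}=\big(-\tau+\sqrt{\tau^2+4m^2pq}\big)/(2mq)$, and using $mpe^{-\theta^\star}-mqe^{\theta^\star}=\tau$ one simplifies $f(\theta^\star)=\theta^\star\tau-m(p+q)+\sqrt{\tau^2+4m^2pq}$. Because $\sqrt{\tau^2+4m^2pq}\le\tau+2m\sqrt{pq}$, we get $e^{\theta^\star}\le\sqrt{p/q}$, hence $\theta^\star\tau\le\tfrac12\log(p/q)\,\tau=\tfrac{t+1}{2\epsilon}\log(a/b)\log n$ (using $p/q=a/b$); this is exactly the $-\tfrac{t+1}{2\epsilon}\log(a/b)$ in the exponent of the claim. (If $\mu\le 0$ the claimed right-hand side is $\ge 1$ and there is nothing to prove, so assume $\mu>0$; the bound $y\ge y_0$ obtained below then forces $m(p-q)>\tau$, so $\theta^\star>0$ and the step above is valid.)

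For the remaining term I would substitute $p=\frac{a\log n}{\binom{n-1}{h-1}}$, $q=\frac{b\log n}{\binom{n-1}{h-1}}$, $\tau=\frac{t+1}{\epsilon}\log n$, and write $y\triangleq m/\binom{n-1}{h-1}$, so that $-m(p+q)+\sqrt{\tau^2+4m^2pq}=-\log n\cdot g(y)$ with $g(y)=y(a+b)-\sqrt{\tfrac{(t+1)^2}{\epsilon^2}+4y^2ab}$. A direct computation shows $g(y)/y=(a+b)-2\sqrt{ab+\tfrac{(t+1)^2}{4y^2\epsilon^2}}$ is increasing in $y$ and equals $\mu$ at the reference point $y_0\triangleq 2\big(\tfrac{h-1}{h}\big)^{h-1}$ (indeed $1/y_0^2=\tfrac14(\tfrac{h}{h-1})^{2h-2}$, exactly the coefficient inside $\mu$), so $g(y)\ge\mu y$ whenever $y\ge y_0$. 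It then remains to lower-bound $y$: by the lower bound on $m$ from Lemma~\ref{lem:m&n_m}, $y\ge 2s(1-s/n)^{h-1}$, and for $s\in[n/2]$ (so $1-s/n\ge\tfrac12$) and $n$ large this clears $y\ge y_0$ and also gives $2s(1-s/n)^{h-1}\ge\tfrac{4s(1-s/n)}{2^{h-1}}$ (equivalently $(1-s/n)^{h-2}\ge 2^{2-h}$). Hence $g(y)\ge\mu y\ge\tfrac{4s(1-s/n)}{2^{h-1}}\mu$, and combining with $f(\theta^\star)\le\tfrac{t+1}{2\epsilon}\log(a/b)\log n-\log n\cdot g(y)$ yields $Pr(X_1-X_2<\tau)\le e^{f(\theta^\star)}\le n^{-[\frac{4s(1-s/n)}{2^{h-1}}\mu-\frac{t+1}{2\epsilon}\log(a/b)]}$, the claim.

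The Chernoff inequality and the moment-generating-function computation are routine; the delicate part is the last step — matching the combinatorial estimate for $m$ to the precise constants, i.e. seeing that the $2^{h-1}$ arises from replacing $(1-s/n)^{h-1}$ by $(1-s/n)\ge\tfrac12$ in the bound on $m$, and that the $(h/(h-1))^{2h-2}$ inside $\mu$ arises from evaluating $g(\cdot)/(\cdot)$ at $y_0=2((h-1)/h)^{h-1}$, which the lower bound on $m$ must exceed for every $s\le n/2$. (Alternatively, simply taking the non-optimal $\theta=\tfrac12\log(a/b)$ gives the same bound with $(\sqrt a-\sqrt b)^2\ge\mu$ in place of $\mu$, so the stated form is a mild weakening of what this argument actually delivers.)
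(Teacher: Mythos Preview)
Your proposal is correct and follows essentially the same route as the paper: Chernoff bound on $X_1-X_2$, exact minimization of the exponent, dropping the positive residual term (you do this via $e^{\theta^\star}\le\sqrt{p/q}$, the paper via $\tfrac{\beta}{2}\log\frac{\gamma+\beta}{\gamma-\beta}\ge 0$ --- these are the same inequality), and then invoking Lemma~\ref{lem:m&n_m} to lower-bound $m/\binom{n-1}{h-1}$. Your ``constant-matching'' step --- observing that $g(y)/y$ is increasing and equals $\mu$ precisely at $y_0=2((h-1)/h)^{h-1}$, so that $y\ge y_0$ yields $g(y)\ge\mu y$ --- is a clean reformulation of what the paper does by substituting the $m$-lower-bound first and then extremizing the resulting expression over $\kappa=n/s$; the critical value $\kappa=h$ in the paper corresponds exactly to your threshold $y_0$, and both then use $(1-s/n)^{h-2}\ge 2^{2-h}$ to pass to the final $\tfrac{4s(1-s/n)}{2^{h-1}}$ prefactor.
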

\begin{proof}
\label{plem:binom_dist_privacy}
Let ${X}_{1}^{(m)} \sim \mathrm{Binom}(m, p)$, ${X}_{2}^{(m)} \sim \mathrm{Binom}(m, q)$, and $\beta = \frac{t+1}{\epsilon}$. Applying Chernoff's bound leads to:
\begin{align}
\label{eq:chernoff}
& Pr\big( X_{1}^{(m)} - X_{2}^{(m)}< \beta \log(n) \big) \nonumber \\
&  \leq \min_{\lambda > 0} e^{\lambda \beta \log(n)} \times \mathbb{E} \Big[e^{-\lambda( {X}_{1}^{(m)} - {X}_{2}^{(m)} ) }\Big] \nonumber \\ 
& = \min_{\lambda > 0} e^{\lambda \beta \log(n) } \times (1- p (1- e^{-\lambda}))^{m} \times  (1- q (1- e^{\lambda}))^{m} \nonumber \\
& \leq \min_{\lambda > 0} e^{\lambda \beta \log(n) } \times e^{-mp(1- e^{-\lambda})-mq(1- e^{-\lambda})} \nonumber \\
&=  \min_{\lambda > 0} \quad n^{-\theta(\lambda)}
\end{align}
where
\begin{align}
\theta(\lambda) = \frac{m}{\binom{n-1}{h-1}} (a+b-ae^{-\lambda}-be^{\lambda}) -\lambda \beta
\end{align}
Let $\gamma = \sqrt{\beta^2+ \frac{4m^2ab}{\binom{n-1}{h-1}^2}}$, then:
\begin{align}
\label{eq:theta_m}
&\theta^* =\max_{\lambda} (\theta) \nonumber\\
&= \frac{m}{\binom{n-1}{h-1}} \Bigg( a + b - 2\sqrt{ab} \times \sqrt{\frac{\beta^2 \binom{n-1}{h-1}^2}{4m^2ab}+1} \;\Bigg) \nonumber \\
& \quad -\frac{\beta}{2} \log{( \frac{a}{b} )} + \frac{\beta}{2} \log{\Big(\frac{\gamma + \beta}{\gamma - \beta}\Big)} \nonumber \\
&\geq \frac{m}{\binom{n-1}{h-1}} \Bigg( a + b - 2\sqrt{ab} \times \sqrt{\frac{\beta^2 \binom{n-1}{h-1}^2}{4m^2ab}+1} \;\Bigg) 
-\frac{\beta}{2} \log{\Big( \frac{a}{b}\Big)}
\end{align}
where the last inequality is directly follows from $\frac{\beta}{2} \log{\Big(\frac{\gamma + \beta}{\gamma - \beta}\Big)} \geq 0$. We now wish to remove the dependence on $m$ in the lower bound~\eqref{eq:theta_m}. By Lemma~\ref{lem:m&n_m}, for sufficiently large $n$:
\begin{align}
 \label{eq:n_m_general2}
 m \geq  2s(1-\frac{s}{n})^{h-1} \binom{n-1}{h-1}
\end{align}
Define $\kappa \triangleq \frac{n}{s}$. Obviously, $\kappa \geq 2$. We have:
\begin{align}
\label{eq:theta_max2}
    \theta^* &\geq  2s(1-\frac{s}{n})^{h-1} \Bigg[ a+b- \sqrt{ \frac{ \beta^2 \kappa^2}{4n^2} \big(\frac {\kappa}{\kappa-1} \big)^{2h-2} +4ab} \; \Bigg] \nonumber \\
    & \qquad - \frac{\beta}{2} \log \big( \frac{a}{b} \big)
\end{align}
Let $g(\kappa) \triangleq \kappa^2 \big(\frac {\kappa}{\kappa-1} \big)^{2h-2}$, a concave function of $\kappa$. Then: 
\begin{align*}
    \kappa^* &=\underset{\kappa}{\arg \max} \, g(\kappa) = h \\
    g(\kappa^*) &= {h^2} \big(\frac {h}{h-1} \big)^{2h-2}
\end{align*}
therefore, for sufficiently large $n$:
\begin{align}
\label{eq:theta_max3}
    \theta^* &\geq 2s (1-\frac{s}{n})^{h-1} \Bigg[ a+b- 2 \sqrt{ \frac{\beta^2}{16n^2}  \big(\frac {h}{h-1} \big)^{2h-2}+ab} \;\Bigg]  \nonumber \\
    &\quad - \frac{\beta}{2} \log \big( \frac{a}{b} \big) \nonumber \\
    &\geq \frac{ 4s (1-\frac{s}{n})}{2^{h-1}} \Bigg[ a+b- 2\sqrt{ab} \times \sqrt{ \frac{\beta^2}{16ab}  \big(\frac {h}{h-1} \big)^{2h-2}+1} \;\Bigg] \nonumber \\
    &\quad- \frac{\beta}{2} \log \big( \frac{a}{b} \big)
\end{align}
due to~$(1-\frac{s}{n})^{h-2} \geq \frac{1}{2^{h-2}}$ for $s \in [1, n/2]$.
Substituting $\theta^*$ back into Eq.~\eqref{eq:chernoff}:
\begin{align}
     Pr\big( X_{1}^{(m)} - X_{2}^{(m)}< \beta \log(n) \big) \leq n^{-  \Big[ \frac{ 4s (1-\frac{s}{n})}{2^{h-1}} \mu-\frac{\beta}{2} \log{( \frac{a}{b})} \Big]}
\end{align}
This completes the proof.
\end{proof}

\begin{lemma}
    \label{lem:m&n_m}
        If $m = 2 \sum\limits_{i=1}^{\min(h-1,s)} \binom{s}{i} \binom{n-s}{h-i}$, where $s \in [n/2]$, and $h\geq2$ is an integer, then for sufficiently large $n$:
    \begin{align*}
        m \geq  2s(1-\frac{s}{n})^{h-1} {\binom{n-1}{h-1}}
    \end{align*}
\end{lemma}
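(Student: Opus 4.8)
The plan is to collapse the alternating-looking sum defining $m$ into a clean closed form and then reduce the claim to a trivial probabilistic inequality.

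\emph{Step 1 (Vandermonde reduction).} I would first observe that, since $\binom{s}{i}=0$ for $i>s$, the sum in the statement is exactly the ``mixed'' part of Vandermonde's identity $\binom{n}{h}=\sum_{i=0}^{h}\binom{s}{i}\binom{n-s}{h-i}$. Stripping off the $i=0$ term $\binom{n-s}{h}$ and the $i=h$ term $\binom{s}{h}$ gives, for every $1\le s\le n$,
\[
\frac{m}{2}=\sum_{i=1}^{\min(h-1,s)}\binom{s}{i}\binom{n-s}{h-i}=\binom{n}{h}-\binom{s}{h}-\binom{n-s}{h}.
\]

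\emph{Step 2 (crude bounds on the subtracted terms).} For $0\le k\le h-1$ one has $\frac{s-k}{n-k}\le\frac{s}{n}$ and $\frac{n-s-k}{n-k}\le\frac{n-s}{n}$ because $s\le n$; multiplying these $h$ factors yields $\binom{s}{h}\le(s/n)^{h}\binom{n}{h}$ and $\binom{n-s}{h}\le(1-s/n)^{h}\binom{n}{h}$ (both holding trivially when $s<h$). Substituting into Step~1,
\[
m\ \ge\ 2\binom{n}{h}\Bigl[\,1-(s/n)^{h}-(1-s/n)^{h}\,\Bigr].
\]

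\emph{Step 3 (probabilistic inequality).} It remains to show $1-p^{h}-(1-p)^{h}\ge h\,p(1-p)^{h-1}$ for $p=s/n$. Reading $(1-p)^{h}$, $h p(1-p)^{h-1}$, $p^{h}$ as $\Pr(B=0),\Pr(B=1),\Pr(B=h)$ for $B\sim\mathrm{Binom}(h,p)$, the difference of the two sides equals $\Pr(2\le B\le h-1)\ge 0$. Combining this with Step~2 and the identity $\frac{h}{n}\binom{n}{h}=\binom{n-1}{h-1}$ gives
\[
m\ \ge\ 2\binom{n}{h}\cdot\frac{h s}{n}\,(1-s/n)^{h-1}\ =\ 2s\,(1-s/n)^{h-1}\binom{n-1}{h-1},
\]
which is the claim (in fact no condition on the size of $n$ is actually needed).

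The step I expect to matter most is Step~1: the natural first instinct is to bound $m$ from below by its single term $2\binom{s}{1}\binom{n-s}{h-1}=2s\binom{n-s}{h-1}$, but for $h\ge 4$ this is \emph{false} --- e.g.\ at $s=n/2$ one has $\binom{n-s}{h-1}<(1-s/n)^{h-1}\binom{n-1}{h-1}$ --- so one is forced to retain the whole Vandermonde sum. Once that is recognized, Steps~2 and~3 are routine.
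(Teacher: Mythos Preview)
Your proof is correct, and in fact tighter than the paper's: the inequality holds for every $n$, not just sufficiently large $n$. The paper takes a different, more computational route: it expands the ratio $m/\binom{n-1}{h-1}$ term by term, writes each summand $\binom{s}{i}\binom{n-s}{h-i}/\binom{n-1}{h-1}$ as $2s\cdot(s/n)^{i-1}(1-s/n)^{h-i}\cdot c_i\cdot(1+o_n(1))$ for suitable constants $c_i$, and then simply drops all summands past the first (which is already $2s(1-s/n)^{h-1}$); the $(1+o_n(1))$ factors are what force the ``sufficiently large $n$'' hypothesis. Your Vandermonde reduction to $m/2=\binom{n}{h}-\binom{s}{h}-\binom{n-s}{h}$ avoids all of that asymptotic bookkeeping, and the binomial-distribution reading of $1-p^{h}-(1-p)^{h}\ge h\,p(1-p)^{h-1}$ is a nice way to package what the paper achieves only approximately. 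The trade-off is that the paper's argument is more ``bare-hands'' and does not require spotting the Vandermonde identity, whereas your approach requires that one observation but then gives a cleaner and non-asymptotic bound.
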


\begin{proof}
\begin{align*}
 &\frac{m}{{\binom{n-1}{h-1}}} = \frac{2 \sum\limits_{i=1}^{\min(h-1,s)}  {\binom{s}{i}} \binom{n-s}{h-i}}{\binom{n-1}{h-1}}\\ 
 =& 2s \Bigg[ \Big[ \frac{(1- \frac{s}{n}) \cdots (1 - \frac{s+h-2}{n})} {(1 - \frac{1}{n}) \cdots (1 -\frac{h -1}{n}) } \Big] + \\
 &\quad \frac{s}{n} \frac{h-1}{ 2!}\Big[ \frac{(1-\frac{1}{s})(1- \frac{s}{n}) \cdots (1- \frac{s+h-3}{n})} {(1 - \frac{1}{n}) \cdots (1 -\frac{h -1}{n}) } \Big] + \cdots  \nonumber \\
 &\quad + (\frac{s}{n})^{h-2} \frac{(h-1)!}{(h-1)!}\Big[ \frac{(1 - \frac{s}{n}) (1 - \frac{1}{s}) \cdots (1 -\frac{h -2}{s})} {(1 - \frac{1}{n}) \cdots (1 -\frac{h -1}{n}) } \Big] \Bigg] \nonumber \\
 =& 2s \Bigg[ (1- \frac{s}{n})^{h-1}  + \frac{s}{n}  \frac{h-1}{ 2!}  (1-\frac{s}{n})^{h-2}+ \cdots  \\
 &\quad + (\frac{s}{n})^{h-2}  (1-\frac{s}{n}) \Bigg] \big(1-o_n(1)\big)\nonumber \\
 &\overset{(a)}{=} 2s (1- \frac{s}{n})^{h-1} \Big[   1+  \frac{h-1}{ 2!} + \cdots +\frac{(h-1)!}{ (h-1)!} \Big] \big(1-o_n(1)\big)\nonumber \\
&\geq  2s(1-\frac{s}{n})^{h-1} 
\end{align*}
where (a) holds since $1-\frac{s}{n} \geq \frac{s}{n}$. Then from the last inequality, we have:
\begin{equation*}
  m \geq  2s(1-\frac{s}{n})^{h-1} {\binom{n-1}{h-1}}
\end{equation*}
\end{proof}
\ifCLASSOPTIONcaptionsoff
  \newpage
\fi
\bibliographystyle{IEEEtran}
\bibliography{Ref.bib}
\end{document}